\setlist[itemize]{label=--}
\setlist[enumerate]{label=(\arabic*)}
\newtheorem{theorem}{Theorem}
\newtheorem{claim}[theorem]{Claim}
\newtheorem{lemma}[theorem]{Lemma}
\theoremstyle{remark}
\newtheorem{remark}[theorem]{Remark}
\newcommand{\Reals}{\mathbb{R}}
\newcommand{\range}{D}
\newcommand{\norm}[1]{\lVert #1 \rVert}
\newcommand{\residual}{\operatorname{res}}
\title{Relaxed Scheduling for Scalable Belief Propagation}
\newenvironment{myabstract}
{\list{}{\listparindent 1.5em%
		\itemindent    \listparindent
		\leftmargin    1cm
		\rightmargin   1cm
		\parsep        0pt}%
	\item\relax}
{\endlist}
\newenvironment{mycover}
{\list{}{\listparindent 0pt
		\itemindent    \listparindent
		\leftmargin    1cm
		\rightmargin   1cm
		\parsep        0pt}%
	\raggedright
	\item\relax}
{\endlist}
\newcommand{\myemail}[1]{\,$\cdot$\, {\small #1}}
\newcommand{\myaff}[1]{\,$\cdot$\, {\small #1}\par\medskip}
\begin{document}

\begin{mycover}
	{\huge\bfseries\boldmath Relaxed Scheduling for \\ Scalable Belief Propagation \par}
	\bigskip
	\bigskip
    \textbf{Vitaly Aksenov}
    \myemail{aksenov.vitaly@gmail.com}
    \myaff{ITMO University}    

    \textbf{Dan Alistarh}
    \myemail{dan.alistarh@ist.ac.at}
    \myaff{IST Austria}
    
    \textbf{Janne H.\ Korhonen}
    \myemail{janne.korhonen@ist.ac.at}
    \myaff{IST Austria}
\end{mycover}

\medskip
\begin{myabstract}
  \noindent\textbf{Abstract.}
    The ability to leverage large-scale hardware parallelism has been one of the key enablers of the accelerated recent progress in machine learning. Consequently, there has been considerable effort invested into developing efficient parallel variants of classic machine learning algorithms. However, despite the wealth of knowledge on parallelization, some classic machine learning algorithms often prove hard to parallelize efficiently while maintaining convergence.

    In this paper, we focus on efficient parallel algorithms for the key machine learning task of inference on graphical models, in particular on the fundamental belief propagation algorithm. We address the challenge of efficiently parallelizing this classic paradigm by showing how to leverage scalable relaxed schedulers in this context. We present an extensive empirical study, showing that our approach outperforms previous parallel belief propagation implementations both in terms of scalability and in terms of wall-clock convergence time, on a range of practical applications.
\end{myabstract}

\thispagestyle{empty}
\setcounter{page}{0}

\section{Introduction}

\emph{Hardware parallelism} has been a key computational enabler for recent advances in machine learning, as it provides a way to reduce the processing time for the  ever-increasing quantities of data required for training accurate models.
Consequently, there has been considerable effort invested into developing  efficient parallel variants of classic machine learning algorithms, e.g.~\citep{recht2011hogwild, PS, liu2015asynchronous, low2014graphlab, pmlr-v5-gonzalez09a}.

In this paper, we will focus on efficient parallel algorithms for the fundamental task of \emph{inference on graphical models}.  
The inference task in graphical models takes the form of \emph{marginalisation}: we are given observations for a subset of the random variables, and the task is to compute the conditional distribution of one or a few variables of interest. The marginalization problem is known to be computationally intractable in general~\citep{DAGUM1993141,ROTH1996273,COOPER1990393}, but inexact heuristics are well-studied for  practical inference tasks.

One popular heuristic for inference on graphical models is \emph{belief propagation}~\citep{bp}, inspired by the exact dynamic programming algorithm for marginalization on trees. While belief propagation has no general approximation or even convergence guarantees, it has proven empirically successful in inference tasks, in particular in the context of decoding low-density parity check codes~\citep{casado2007informed}. However, it remains poorly understood how to properly parallelize belief propagation.

\paragraph{Parallelizing belief propagation.} To illustrate the challenges of parallelizing belief propagation, we will next give a simplified overview of the belief propagation algorithm, and refer the reader to Section~\ref{section:preliminaries} for full details. Belief propagation can be seen as a \emph{message passing} or a \emph{weight update} algorithm. In brief, belief propagation operates over the underlying graph $G = (V,E)$ of the graphical model, maintaining a vector of real numbers called a \emph{message} $\mu_{i \to j}$ for each ordered pair $(i,j)$ corresponding to an edge $\{ i, j \} \in E$ (Fig.~\ref{fig:bp-messages}). The core of the algorithm is the \emph{message update rule} which specifies how to update an outgoing message $\mu_{i \to j}$ at node $i$ based on the \emph{other} incoming messages at node $i$; for the purposes of the present discussion, it is sufficient to view this as black box function $f$ over these other messages, leading to the update rule
\begin{equation}\label{eq:bp-abstract}
\mu_{i \to j} \gets f\bigl( \{ \mu_{k \to i}  \colon k \in N(i) \setminus \{ j \} \} \bigr)\,.
\end{equation}
This update rule is applied to messages until the values of messages have converged to a stable solution, at which point the algorithm is said to have terminated.

\begin{wrapfigure}{rt}{0.4\linewidth}
\center
\includegraphics[width=0.75\linewidth]{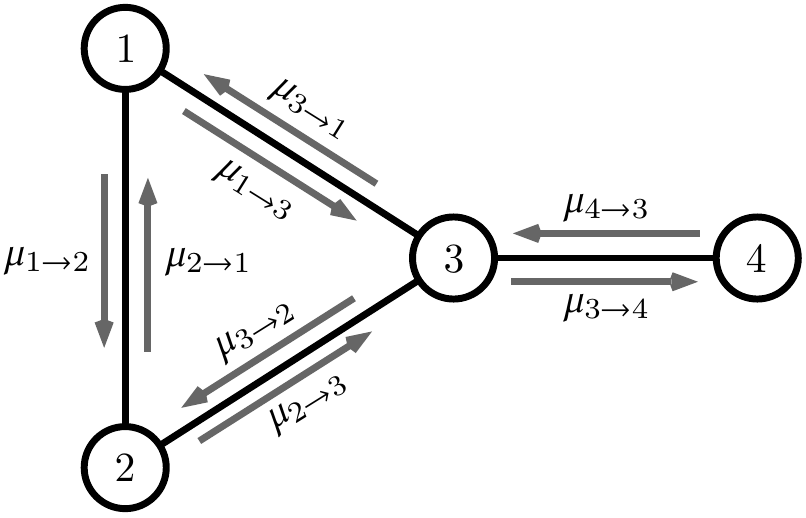}
\caption{\boldmath State of the belief propagation algorithm consist of two directed messages for each edge. %
}\label{fig:bp-messages}
\end{wrapfigure}

Importantly, the message update rule does not specify \emph{in which order} messages should be updated. The standard solution, called \emph{synchronous belief propagation}, is to update all the message simultaneously. That is, in each global round $t = 1, 2, 3, \dotsc$, given message values $\mu^t_{i \to j}$ for all pairs $(i,j)$, the new values $\mu^{t+1}_{i \to j}$ are computed as 
\begin{equation*}
\mu^{t+1}_{i \to j} \gets f\bigl( \{ \mu^{t}_{k \to i}  \colon k \in N(i) \setminus \{ j \} \} \bigr) 
\end{equation*}
However, there is  evidence that updating messages \emph{one at a time} leads to faster and more reliable convergence~\citep{elidan2006residual}; 
in particular, various proposed \emph{priority-based schedules}---schedules that try to prioritize message updates that would make `more progress'---have proven empirically to converge with much fewer message updates than the synchronous schedule~\citep{elidan2006residual,knoll2015message, Sutton:2007:IDS:3020488.3020534}.

Having to execute updates in a strict priority order poses a challenge for efficient \emph{parallel} implementations of belief propagation: while the synchronous schedule is naturally parallelizable, as all message updates can be done independently, the more efficient priority-based schedules are inherently sequential and thus seem difficult to parallelize. Accordingly, existing work on efficient parallel belief propagation has focused on designing custom schedules that try to import some features from the priority-based schedules while maintaining a degree of parallelism~\citep{pmlr-v5-gonzalez09a,merwe2019}.

\paragraph{Our contributions.}
In this work, we address this challenge by studying how to efficiently parallelize any priority-based schedule for belief propagation. The key idea is that we can \emph{relax} the priority-based schedules by allowing limited out-of-order execution, concretely implemented using a \emph{relaxed scheduler}, as we will explain next.

Consider a belief propagation algorithm that schedules the message updates according to a priority function $r$ by always updating the message $\mu_{i\to j}$ with the highest priority $r(\mu_{i \to j})$ next; this framework captures existing priority-based schedules such as residual belief propagation~\citep{elidan2006residual} and its variants~\citep{knoll2015message, Sutton:2007:IDS:3020488.3020534}. Concretely, an iterative centralized version of this algorithm can be implemented by storing the messages in a priority queue $Q$, and iterating the following procedure: 
\begin{enumerate}[label=(\arabic*)]
    \item Pop the top element for $Q$ to obtain the message $\mu_{i \to j}$ with highest priority $r(\mu_{i \to j})$.
    \item Update message $\mu_{i \to j}$ following (\ref{eq:bp-abstract}).
    \item Update the priorities in $Q$ for messages affected by the update.
\end{enumerate}
This template does not easily lend itself to efficient parallelization, as the priority queue $Q$ becomes a contention bottleneck. 
Previous work, e.g.~\citep{pmlr-v5-gonzalez09a, merwe2019} investigated various heuristics for the parallel scheduling of updates in belief propagation, trading off increased parallelism with additional work in processing messages or even potential loss of convergence. 

In this paper, we investigate an alternative approach, replacing the priority queue $Q$ with a \emph{relaxed priority queue (scheduler)} to obtain a efficient parallel version of the above template. A \emph{relaxed scheduler}~\cite{AKLN17, alistarhbkln18, alistarhBKN18, alistarhnk2019} is similar to a priority queue, but instead of guaranteeing that the \emph{top} element is always returned first, it only guarantees to return \emph{one of the top $q$ elements by priority} , where $q$ is a variable parameter. 
Relaxed schedulers are popular in the context of parallel graph processing, e.g.~\citep{gonzalez2012powergraph, Nguyen13}, and can induce non-trivial trade-offs between the degree of relaxation and the scalability of the underlying implementation, e.g.~\citep{alistarhbkln18, alistarhnk2019}. 

For belief propagation, relaxed schedulers induce a \emph{relaxed priority-based scheduling} of the messages, roughly following the original schedule but allowing for message updates to be performed out of order, with guarantees on the maximum degree of priority inversion. 
We investigate  the scalability-convergence trade-off between the \emph{degree of relaxation} in the scheduler, and the \emph{convergence behavior} of the underlying algorithm, both theoretically and practically. In particular: 

\begin{itemize}[leftmargin=4.5mm]
    \item We present a general scheme for parallelizing belief propagation schedules using relaxed schedulers with theoretical guarantees. While relaxed schedulers have been applied in other settings, and relaxed scheduling has been touched upon in belief propagation scheduling~\cite{pmlr-v5-gonzalez09a}, no systematic study on relaxed belief propagation scheduling has been performed in prior work.
    \item We provide a theoretical analysis on the effects of relaxed scheduling for belief propagation on trees. We exhibit both positive results--instance classes where relaxation overhead is negligible--and negative results, i.e., worst-case instances where relaxation causes significant wasted work. 
\end{itemize}

\paragraph{Experimental evaluation.} We implement our relaxed priority-based scheduling framework with a \emph{Multiqueue} data structure~\citep{rihani2015brief} and instantiate it with several known priority-based schedules. In the benchmarks, we show that this framework gives state-of-the-art parallel scalability on a wide variety of Markov random field models. As expected, the relaxed priority-based schedules require slightly more message updates than their exact counterparts, but this performance overhead is offset by their better scalability. In particular, we highlight the fact that the relaxed version of the popular residual belief propagation algorithm has state-of-the art parallel scaling while requiring no tuning parameters, making it an attractive practical solution for belief propagation. This finding is illustrated in Figure~\ref{figure:bars-intro}, and further substantiated in Section~\ref{section:experiments}.

\begin{figure*}
\begin{center}
\includegraphics[width=0.45\textwidth]{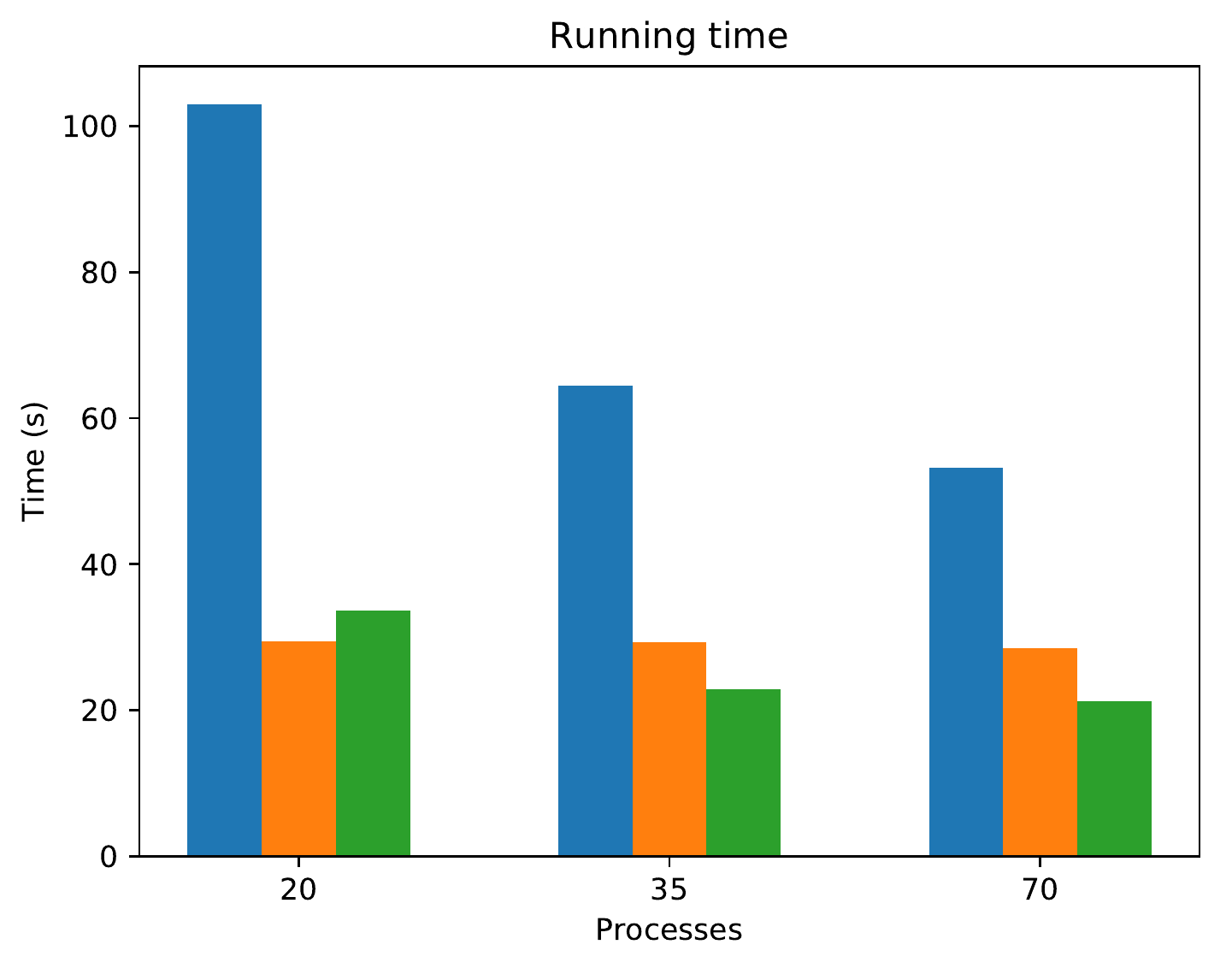}
\includegraphics[width=0.443\textwidth]{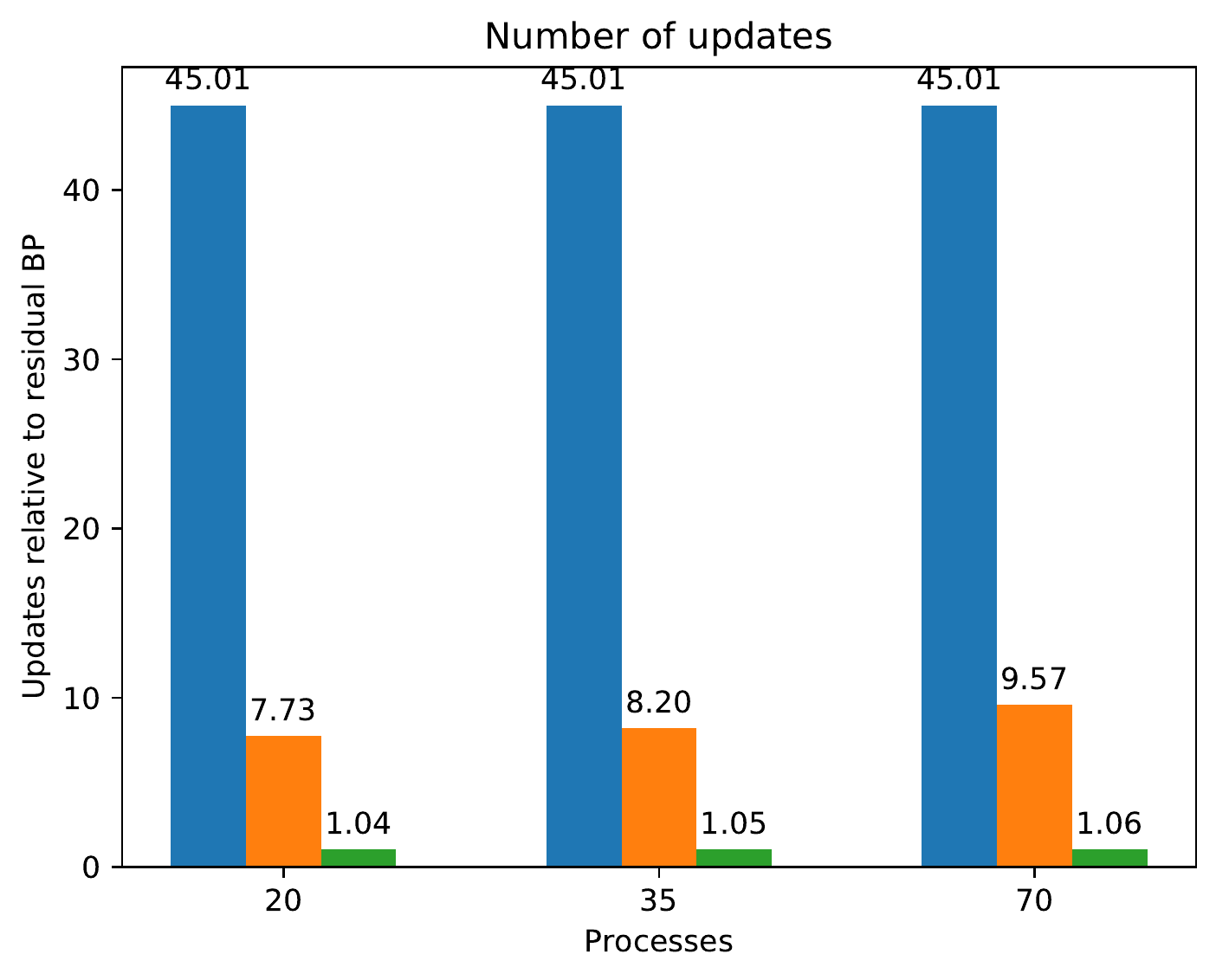}
\caption{Running time (\textbf{left}) and number of updates until convergence (\textbf{right}) on a $1000 \times 1000$ Ising grid model on $p \in \{20, 35, 70 \}$ processes. Included algorithms are synchronous belief propagation, residual splash belief propagation of Gonzalez et al.~\cite{pmlr-v5-gonzalez09a} with splash size 10, and our relaxed residual belief propagation. The number of updates is relative to sequential residual belief propagation.}\label{figure:bars-intro}
\end{center}
\end{figure*}

\section{Preliminaries and Related Work}\label{section:preliminaries}

\subsection{Belief Propagation}

We consider marginalization in \emph{pairwise Markov random fields}; one can equivalently consider factor graphs or Bayesian networks~\citep{yedidia2003}. A pairwise Markov random field is defined by a set of random variables $X_1, X_2, \dotsc, X_n$, a graph $G = (V,E)$ with $V = \{ 1, 2, \dotsc, n \}$, and a set of \emph{factors}
\begin{align*}
    \psi_{i} & \colon \range_i \to \Reals^+ && \text{for $i \in V$,}\\
    \psi_{ij} & \colon \range_i \times \range_j \to \Reals^+ && \text{for $\{i, j\} \in E$,}
\end{align*}
where $\range_i$ denotes the domain of random variable $X_i$. The edge factors $\psi_{ij}$ represent the dependencies between the random variables, and the node factors $\psi_{i}$ represent a priori information about the individual random variables; the Markov random field defines a joint probability distribution on $X = (X_1, X_2, \dotsc, X_n)$ as
\[ \Pr\bigl[ X = x \bigr] \propto \prod_{i} \psi_i(x_i) \prod_{ij} \psi_{ij}(x_i, x_j) \,,\]
where the `proportional to' notation $\propto$ hides the normalization constant applied to the right-hand side to obtain a probability distribution.
The marginalization problem is to compute the probabilities $\Pr[ X_i = x ]$ for a specified subset of variables; for convenience, we assume that any  observations regarding the values of other  variables are encoded in the node factor functions $\psi_i$.  

Belief propagation is a message-passing algorithm; for each ordered pair $(i,j)$ such that $\{ i,j \} \in E$, we maintain a \emph{message} $\mu_{i \to j} \colon \range_j \to \Reals$, and the algorithm iteratively updates these messages until the values (approximately) converge to a fixed point. On Markov random fields, the message update rule gives the new value of message $\mu_{i \to j}$ as a function of the old messages directed to node $i$ by
\begin{equation}\label{eq:bp}
\mu_{i \to j}(x_j) \propto \sum_{x_i \in \range_i} \psi_i(x_i)\psi_{ij}(x_i,x_j) \prod_{k \in N(i)\setminus \{ j \}} \mu_{k \to i} (x_i)\,, 
\end{equation}
where $N(j)$ denotes the neighbors of node $j$ in the graph $G$. Once the algorithm has converged, the marginals are estimated as
$\Pr[ X_i = x_i ] \propto \psi_i(x_i) \prod_{j \in N(i)} \mu_{j \to i}(x_i)$.

The update rule (\ref{eq:bp}) can be applied in arbitrary order. The standard \emph{synchronous belief propagation} updates all the message simultaneously; in each global round $t = 1, 2, 3, \dotsc$, given message values $\mu^t_{i \to j}$ for all pairs $\{i,j\} \in E$, the new values $\mu^{t+1}_{i \to j}$ are computed as 
\begin{equation*}
\mu^{t+1}_{i \to j}(x_j) \propto \sum_{x_i \in \range_i} \psi_i(x_i)\psi_{ij}(x_i,x_j) \prod_{k \in N(i)\setminus \{ j \}} \mu^{t}_{k \to i} (x_i)\,. 
\end{equation*}

\subsection{Asynchronous belief propagation}
Starting with Elidan et al.~\citep{elidan2006residual}, there has been a line of research arguing that \emph{asynchronous} or \emph{iterative} schedules for belief propagation tend to converge more reliably and with fewer message updates that the synchronous schedule. In particular, the practical work has focused on developing schedules that attempt to iteratively perform `the most useful' update at each step; the most prominent of these algorithms is the \emph{residual belief propagation} of Elidan et al.~\citep{elidan2006residual}, with other proposals aiming to address the shortcomings of residual belief propagation in various cases.

\paragraph{Residual belief propagation.} Given a current state of messages, let $\mu'_{i \to j}$ denote the message we would obtain by applying the message update rule (\ref{eq:bp}) to message $\mu_{i \to j}$. In residual belief propagation, the priority of a message is given by the \emph{residual} $\residual(\mu_{i \to j})$ of a message $\mu_{i \to j}$, defined as
\begin{equation}\label{app:eq:residual}
\residual(\mu_{i \to j}) = \norm{\mu'_{i \to j} - \mu_{i \to j}}\,, 
\end{equation}
where $\norm{\cdot}$ is an arbitrary norm; in this work, we assume $L^2$ norm is used unless otherwise specified. That is, the residual of a message corresponds to amount of change that would happen if message $\mu_{i \to j}$ would be updated. Note that this means that  residual belief propagation performs \emph{lookahead}, that is, the algorithm precomputes the future updates before applying them to the state of the algorithm.

\paragraph{Weight decay belief propagation.} \emph{Weight decay belief propagation} of~\citep{knoll2015message} is a variant of residual belief propagation that penalizes message priorities for repeated updates. That is, let $m(\mu_{i \to j})$ denote how many times message $\mu_{i \to j}$ has been updated by the algorithm, and let $\residual(\mu_{i \to j})$ denote the residual of a message as above. The priority function of weight decay belief propagation is
\[ r(\mu_{i \to j}) = \frac{\residual(\mu_{i \to j})}{m(\mu_{i \to j})}\,.\]
The motivation behind this weight decay scheme is that empirical observations suggest that one possible failure mode of residual belief propagation is getting stuck in cycles with large residuals; the weight decay prioritizes other edges in cases where this happens.

\paragraph{Residual without lookahead.} Another variant of residual belief propagation is the lookahead-avoiding belief propagation of~\citep{Sutton:2007:IDS:3020488.3020534}. As the name implies, this algorithm does not perform the exact residual computation using (\ref{app:eq:residual}), but instead approximates the residuals indirectly, with the aim of reducing the computational cost of priority updates.

Informally, the basic idea is that for each message $\mu_{i \to j}$, we track the amount other incoming messages at node $i$ have changed since the last update of $\mu_{i \to j}$, and use this to define the priority of updating $\mu_{i \to j}$. The actual approximation in the algorithm uses a slightly different notion of residual from (\ref{app:eq:residual}), so we refer to~\citep{Sutton:2007:IDS:3020488.3020534} for full details.

\subsection{Parallel belief propagation}\label{app:parallel-bp}

The question of parallelizing belief propagation is not fully understood. The synchronous schedule is trivially parallelizable by performing updates within each round in parallel, but the improved convergence properties of iterative schedules cannot easily be translated to parallel setting. Recent proposals aim to bridge this gap in an ad-hoc manner by designing custom algorithms for specific parallel settings. We discuss the most relevant ones below, omitting ones that apply strictly to a multi-machine distributed setting (e.g. Gonzalez et al.~\cite{10.5555/1795114.1795139}.)

\paragraph{Residual splash.} The \emph{residual splash} belief propagation~\citep{pmlr-v5-gonzalez09a} is a vertex-based algorithm inspired by residual belief propagation. The residual splash algorithm was initially designed for MapReduce computation, and it aims to have larger individual tasks while retaining a similar structure to residual belief propagation.

Specifically, the residual splash algorithm works by defining a priority function over nodes of the Markov random field, and selecting the next node to process in a strict priority order. For the selected node, the algorithm performs a \emph{splash} operation that propagates information within distance \emph{H} in the graph; in practice, this results in threads performing larger individual tasks at once, offsetting the cost of accessing the strict scheduler.

In detail, the priority of for nodes is given by the \emph{node residual}, defined as
\[ \residual(i) = \max_{j \in N(i)} \residual(\mu_{j \to i})\,.\]
Given a \emph{depth parameter} $H$, the splash operation at node $i$ is defined by following sequence of message updates:
\begin{enumerate}
    \item Construct a BFS tree $T$ of depth $H$ rooted at node $i$.
    \item In the reverse BFS order on $T$---starting from leaves---process all nodes in $T$, updating all outgoing messages for each node processed.
    \item Repeat the previous step in BFS order, i.e., starting from the root.
\end{enumerate}
In other words, this process gather all available information at radius $H$ from the selected node, and propagates it to all nodes within the radius. 

\paragraph{Mixed synchronous and priority-based belief propagation.} Mixed strategies for belief propagation scheduling have also been proposed, by Van der Merve et al.~\citep{merwe2019} for GPUs and by Yin and Gao~\citep{10.1145/2661829.2662081} for distributed setting. These proposals use residuals or other similar priority functions to select a set of high-score  messages to update at each step, and then perform those updates as in the synchronous schedule. While these algorithms work well in their original contexts, where relaxed schedulers and other concurrent data structures cannot be utilized, this strategy is not efficient on shared-memory parallel setting on CPUs, as seen in the experiments.


\section{Relaxed Priority-based Belief Propagation}\label{section:relaxation}


In this section, we describe our framework for parallelizing belief propagation schedules via relaxed schedulers. The main idea of the framework follows the description given in the introduction; however, we generalize it to capture schedules that do not use individual messages as elementary tasks, e.g. residual belief propagation~\citep{pmlr-v5-gonzalez09a}.

\subsection{Relaxed Scheduling for Iterative Algorithms}\label{sec:relaxed-schedulers}

Relaxed schedulers are a basic tool to parallelize iterative algorithms, used in the context of large-scale graph processing~\citep{gonzalez2012powergraph, Nguyen13, blelloch2016parallelism,  dhulipala17julienne, dhulipala2018theoretically}. 
At a high level, such iterative algorithms can be split into \emph{tasks}, each corresponding to a local operation involving, e.g., a vertex and its edges. A standard example is Dijkstra's classic shortest-paths algorithm, where each task updates the distance between a vertex and the source, as well as the distances of the vertex's neighbours. In many algorithms, tasks have a natural priority order---in Dijkstra's, the top task corresponds to the vertex of minimum distance from the source. 
Many graph algorithms share this structure~\cite{shun13priority, alistarhBKN18}, and can be mapped onto the priority-queue pattern described in the introduction. 
However, due to contention on the priority queue, implementing this pattern in practice can negate the benefits of parallelism~\cite{Nguyen13}. 

\paragraph{Relaxed scheduler definition.} 
A natural idea is to downgrade the guarantees of the perfect priority queue, to allow for more parallelism. 
Relaxed schedulers~\citep{alistarhbkln18} formalize this relaxation as follows. 
We are given a parameter $q$, the degree of relaxation of the scheduler. 
The $q$-relaxed scheduler is a \emph{sequential} object supporting \texttt{Insert (<key, priority>)}, \texttt{IncreaseKey (<key, priority>)}, with the usual semantics, and an \texttt{ApproxDeleteMin()} operations, ensuring:
\begin{enumerate}
    \item \textbf{Rank Bound.}  \texttt{ApproxDeleteMin()} returns one of the top $q$ elements in priority order. 
    \item  \textbf{\boldmath $q$-fairness.} We say that a \emph{priority inversion} on element \texttt{<key, priority>} is the event that \texttt{ApproxDeleteMin()} returns a key with a \emph{lower} priority while \texttt{<key, priority>} is in the queue. The $q$-fairness condition requires that any element can experience at most $q$ priority inversions before it must be returned. 
\end{enumerate}

Relaxed schedulers are quite popular in practice, as several efficient implementations have been proposed and applied~\citep{LotanShavit, Basin11, klsm, SprayList, Haas, Nguyen13, MQ, AKLN17, sagonas2017contention}, with state-of-the-art results in the graph processing domain~\citep{Nguyen13, gonzalez2012powergraph, Swarm}. 
A parallel line of work has attempted to provide guarantees on the amount of relaxation in individual schedulers~\citep{AKLN17, alistarhBKN18, Rukundo2019}, 
as well as the impact of using relaxed scheduling on existing iterative algorithms~\citep{alistarhbkln18, alistarhnk2019}.
Here, we employ the modeling of relaxed schedulers used in e.g.~\citep{alistarhBKN18, alistarhnk2019} for graph algorithms, but apply it to inference on graphical models.

\subsection{Relaxed Priority-based Belief Propagation}

Given a Markov random field, a priority-based schedule for BP is defined by a set of \emph{tasks} $T_1, T_2, \dotsc, T_K$, each corresponding to a sequence of edge updates, and a priority function $r$ that assigns a priority $r(T_i)$ to a task based on the current state of the messages as well as possible auxiliary information maintained separately. As discussed in the introduction, a non-relaxed algorithm can store all tasks in a priority queue, iteratively retrieve the highest-priority task, perform all its message updates, and update priorities accordingly.

The relaxed variant works in exactly the same way, but assuming a \emph{$q$-relaxed} priority scheduler $Q_q$. More precisely, the following steps are repeated until a fixed convergence criterion is reached:
\begin{enumerate}[label=(\arabic*)]
    \item $T_i \gets Q_q.\texttt{ApproxDeleteMin}()$ selects a task $T_i$ among the $q$ of highest priority in $Q_q$.
    \item Perform all message updates specified by the task~$T_i$.
    \item Update the priorities for all tasks.
\end{enumerate}
Note that tasks can be executed multiple times in this framework. In particular, we assume that the priority $r(T_i)$ of a task $T_i$ can only remain the same or increase when other tasks are executed, and the only point where the priority decreases is when the task is actually executed.

\subsection{Concurrent Implementation} 
\label{sec:algo}

The sequential version of a priority-based schedule for belief propagation can be implemented using a priority queue $Q$. 
One could map the sequential pattern directly to a parallel setting, by replacing the sequential priority queue with a linearizable concurrent one. 
However, this may not be the best option, for two reasons. 
First, it is challenging to build \emph{scalable} exact priority queues~\citep{lenharth2015priority}---the data structure is inherently contended, which leads to poor cache behavior and poor performance. 
Second, in this context, linearizability only gives the illusion of atomicity with respect to task message updates: 
the data structure only ensures that the \emph{task removal} is atomic, whereas the actual message updates which are part of the task are not usually performed atomically together with the removal. 

\paragraph{Multiqueue.} 
For the aforementioned reasons, in our framework, we use a \emph{relaxed} priority scheduler, specifically a scalable approximate priority queue called the Multiqueue~\citep{rihani2015brief, AKLN17}. 
As the name suggests, the Multiqueue is composed of $m$ independent \emph{exact} priority queues. 
To \texttt{Insert} an element, a thread picks one of the exact priority queues uniformly at random, and inserts into it. 
To perform \texttt{ApproxDeleteMin()}, the thread picks \emph{two} of the priority queues uniformly at random, and removes the \emph{higher priority} element among their two top elements. 
Although very simple, this strategy has been shown to have strong probabilistic rank and fairness guarantees:

\begin{theorem}[\cite{AKLN17, alistarhbkln18}]
    A Multiqueue formed of $p \geq 3$ priority queues ensures the rank and fairness guarantees with parameter $q = O( p \log p )$, with high probability. 
\end{theorem}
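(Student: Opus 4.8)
The plan is to establish the two required guarantees — the rank bound and $q$-fairness — separately, each through a \emph{power of two choices} argument that tracks how the globally highest-priority elements are spread across the $p$ component queues. Throughout, I would exploit that both the insertion queue and the two queues sampled by each \texttt{ApproxDeleteMin} are chosen uniformly at random, so that the relevant events are governed by balls-into-bins style concentration.

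For the fairness bound, fix an element $e$ and follow it from insertion to removal. On insertion $e$ lands in a uniformly random queue $j$. The key observation is that, once $e$ has reached the top of queue $j$, each subsequent \texttt{ApproxDeleteMin} samples queue $j$ with probability $1 - \binom{p-1}{2}/\binom{p}{2} = 2/p$, independently of the past choices. Whenever queue $j$ is sampled and the co-sampled queue has a top no better than $e$, the operation returns $e$ itself; and whenever it returns a strictly \emph{better} element, this is a legitimate higher-priority return, not a priority inversion on $e$. Hence every operation that inflicts a priority inversion on $e$ must have failed to sample queue $j$, an event of probability $1-2/p$. The number of inversions is therefore stochastically dominated by a geometric random variable with success probability $2/p$, and a standard geometric tail bound gives $O(p \log p)$ inversions with high probability; a union bound over all (at most polynomially many) elements turns this into a global guarantee. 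The phase before $e$ reaches the top of queue $j$ only removes elements that are \emph{better} than $e$ from queue $j$ — again at rate $2/p$ — so it folds into the same geometric estimate.

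For the rank bound, consider a single \texttt{ApproxDeleteMin} returning an element $e$, and let $\operatorname{rank}(e)$ count the elements currently present with strictly higher priority. The returned $e$ is the better of the tops of two uniformly random queues, so among the $p$ queue-tops its position behaves like the minimum of two independent samples, which is small with high probability. The subtlety is that $\operatorname{rank}(e)$ also counts higher-priority elements \emph{buried} below the top of their queue. I would bound these by arguing that, since each of the $O(p\log p)$ globally best elements was assigned to its queue by an independent uniform choice on insertion, with high probability no queue accumulates too many of them above any fixed level; a Chernoff bound on the insertion choices then yields that the better-of-two-tops has global rank $O(p\log p)$ with high probability.

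The hard part is that none of this is a static balls-into-bins snapshot: inserts and deletes interleave, removals continually change the queue tops, and an adversary may insert an element better than $e$ into queue $j$ \emph{after} $e$ is already there, demoting $e$ from the top. Consequently the ``queue $j$ is sampled'' and ``rank of $e$'' events are not literally independent across time. The crux of the argument is to set up a coupling or stochastic-domination statement showing that the true, dependent process is dominated by the idealized i.i.d.\ geometric process analyzed above, so that the clean $2/p$-rate bound and the Chernoff estimate remain valid; combining these two per-element high-probability bounds through a union bound is what produces the final $q = O(p \log p)$ guarantee with high probability.
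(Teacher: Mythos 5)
First, note that the paper does not prove this statement at all: it is imported verbatim from the cited works on Multiqueues, so there is no ``paper proof'' to match yours against; your attempt has to be judged against the actual analyses in those references.

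Your fairness argument is essentially sound and matches the easy half of the known analysis: once an element $e$ is the top of its queue $j$ (in particular, once it is the globally best element), every \texttt{ApproxDeleteMin} that samples $j$ returns either $e$ or something of higher priority, so inversions on $e$ can only come from the $\bigl(1-2/p\bigr)$-probability event of missing $j$, and a geometric tail bound plus a union bound gives $O(p\log p)$ w.h.p. The rank bound, however, is where your proposal has a genuine gap, and it is exactly the part you defer. Your plan is to treat the locations of the $O(p\log p)$ globally best elements as independent uniform balls-into-bins placements and apply a Chernoff bound to the insertion choices. This snapshot picture is false: at any given time the assignment of high-priority elements to queues is the result of the entire interleaved history of inserts and deletes, and the deletion rule itself (always removing the better of two sampled tops) systematically biases which queues still hold high-priority elements; moreover the adversary controls insertion priorities adaptively. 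You acknowledge this (``the crux of the argument is to set up a coupling or stochastic-domination statement'') but never construct the coupling, and this is precisely the hard technical content of the cited results: the actual proofs analyze the long-run Markovian dynamics of the queue configuration via an exponential potential function (in the style of the $(1+\beta)$-choice process of Peres, Talwar, and Wieder), from which the per-operation rank bound of $O(p)$ in expectation and $O(p\log p)$ w.h.p.\ follows. Without that machinery, or some substitute domination argument, your rank bound does not go through; naming the needed lemma is not the same as proving it.
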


\paragraph{Implementation details.}
For our purposes, we assume that each thread $i$ has one or a few local concurrent priority queues, used to store pointers to BP-specific tasks (e.g. messages), which are prioritized by an algorithm-specific function, e.g. the residual values for residual BP. 
We store additional metadata as required by the algorithm and the graphical model. 
(In our experiments, we use binary heaps for these priority queues, protected by locks.) 
To process a new task, the thread selects two among all the priority queues uniformly at random, and accesses the task/message from the queue whose top element has higher priority. The task is marked as \emph{in-process} so it cannot be processed concurrently by some other thread. 
The thread then proceeds to perform the metadata updates required by the underlying variant of belief propagation, e.g., updating the message and the priorities of messages from the corresponding node. 
The termination condition, e.g., the magnitude of the largest update, is checked periodically. 


\section{Dynamics of Relaxed Belief Propagation on Trees}\label{section:theory}


As we will see in Section~\ref{section:experiments}, the relaxed priority-based belief propagation schedules yield fast converge times on a wide variety of Markov random fields; specifically, the number of message updates is roughly the same as for the non-relaxed version, while the running times are lower. The complementary theoretical question we examine here is whether we can give analytical bounds how much extra work---in terms of additional message updates---the relaxation incurs in the worst-case. 

Unfortunately, the dynamics of even synchronous belief propagation are poorly understood on general graphs, and no priority-based algorithms provide general guarantees on the convergence time. As such, we can only hope to gain some limited understanding on why relaxation retains the fast convergence properties of the exact priority-based schedules. Here, we present some theoretical evidence suggesting that as long as a schedule does not impose long dependency chains in the sequence of updates, relaxation incurs low overhead, but also that simple (non-loopy) worst-case instances exist. 

\paragraph{Analytical model.}
For analysis of the relaxed priority-based belief propagation, we consider the formal model introduced by~\citep{alistarhnk2019, alistarhBKN18} to analyze performance of iterative algorithms under relaxed schedulers. Specifically, we model a relaxed scheduler $Q_q$ as a data structure which stores pairs corresponding to tasks and their priorities, with the operational semantics given in Section~\ref{section:relaxation}. In particular, there exists a parameter $q$ such that each \texttt{ApproxDeleteMin} returns one of the $q$ highest priority tasks in $Q_q$, and if a task $T$ becomes the highest priority task in $Q_q$ at some point during the execution, then one of the next $q$ \texttt{ApproxDeleteMin} operations returns $T$. (By~\citep{AKLN17, alistarhbkln18}, our practical implementation will satisfy these conditions with parameter $q = O(p \log p)$ w.h.p., where $p$ is the number of concurrent threads.)
Other than satisfying these properties, we assume that the behavior of $Q_q$ can be adversarial, or randomized.


We model the behavior of relaxed priority-based belief propagation by investigating the number of message updates needed for convergence when the algorithm is executed \emph{sequentially} using a relaxed scheduler $Q_q$ satisfying the above constraints. 
This analysis reduces to a sequential game between the algorithm, which queries $Q_q$ for tasks/messages, and the scheduler, which returns messages in possibly arbitrary fashion. One may think of the relaxed sequential execution as a form of linearization for the actual parallel execution---reference~\citep{alistarhbkln18} formalizes this intuition. Please see the discussion at the end of this section for a practical interpretation.

\paragraph{Relaxed belief propagation on trees.} 
We now consider the behavior of relaxed residual belief propagation schedules on \emph{trees with a single source}. The setting is similar to the analysis of residual splash of Gonzalez et al.~\citep{pmlr-v5-gonzalez09a}. 
Specifically, we assume that the Markov random field and the initialization of the algorithm satisfies (1) The graph $G = (V,E)$ is a tree with a specified root $r$; and (2) The factors of the Markov random field and the initial messages are such that the residuals are zero for all messages other than the outgoing messages from the root, i.e., $\residual(\mu_{i \to j}) = 0$ if $i \ne r$.

These conditions mean that residual belief propagation will start from the root, and propagate the messages down the trees until propagation reaches all leaves. In particular, residual belief propagation without relaxation will perform $n-1$ message updates before convergence, updating each message away from root once.
While this setting is restrictive, it does model practical instances where the MRF has locally tree-like structure, such as LDPC codes (see Section~\ref{section:experiments}).

To characterize the dynamics on relaxed residual belief propagation on trees with a single source, we observe that the algorithm can make two types of message updates:
\begin{itemize}[leftmargin=4.5mm]
    \item Updating a message with zero residual, in which case nothing happens (\emph{a wasted update}). This happens if the scheduler relaxes past the range of messages with non-zero residual. 
    \item Updating a message $\mu_{i \to j}$ with non-zero residual, in which case the residual of $\mu_{i \to j}$ goes down to zero, and the messages $\mu_{j \to k}$ for the children $k$ of $j$ may change their residuals to non-zero values (\emph{a useful update}).
\end{itemize}
It follows that each edge will get updated only once with non-zero residual. At any point of time during the execution of the algorithm, we say that the \emph{frontier} is the set of messages with non-zero residual, and use $F(t)$ to denote the size of the frontier at time step $t$.

To see how the size of the frontier relates to the number message updates in relaxed residual belief propagation, observe that after a useful update, we have one of the following cases:
\begin{itemize}[leftmargin=4.5mm]
    \item If $F(t) \ge q$, then the next \texttt{ApproxDeleteMin}() operation to $Q_q$ will give an edge with non-zero residual, resulting in a useful update.
    \item If $F(t) < q$, then in the worst case we need $q$ \texttt{ApproxDeleteMin}() operations until we perform a useful update.
\end{itemize}
Our main analytic result bounds the worst-case work incurred by relaxation in two concrete cases.

\begin{lemma}
\label{lem:cases}
    Assume a $q$-relaxed scheduler $Q_q$ for belief propagation in the tree model defined above. 
    The total number of updates performed by relaxed residual BP can be bounded as follows:
    \begin{itemize}[leftmargin=4.5mm]
        \item \textbf{\emph{Good case: uniform expansion}.} If the tree model has identical and non-deterministic edge factors $\psi_{ij}$ with $\psi_{ij}(x_i,x_j) \ne 0$ for all $\{i, j\}$, then the total number of updates performed by relaxed residual BP is $n + O(Hq^2)$.
        
        \item \textbf{\emph{Bad case: long paths}.} There exists a tree instance with height $H = o(n)$ and an adversarial scheduler where relaxed residual belief propagation performs $\Omega(qn)$ message updates. 
    \end{itemize}
\end{lemma}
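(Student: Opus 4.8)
The plan is to analyze the two cases separately, since they require genuinely different techniques: the good case calls for an amortized bound on wasted updates driven by the frontier dynamics, while the bad case requires an explicit adversarial construction.

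For the \textbf{good case}, the key observation is that with identical nondegenerate edge factors, every useful update at a node $j$ makes \emph{all} of its children's outgoing messages acquire nonzero residual simultaneously. This means the frontier behaves like a BFS wavefront expanding down the tree. The plan is to partition the execution into phases by depth level $\ell = 0, 1, \dotsc, H$. Within a level, once the frontier at that level has size at least $q$, the frontier dynamics above guarantees that every \texttt{ApproxDeleteMin} returns a nonzero-residual edge, so there are no wasted updates; the $n$ useful updates across all levels are unavoidable and account for the leading $n$ term. The wasted updates can only occur when the frontier is small, i.e., $F(t) < q$, and I would argue this happens only at the ``boundaries'' between expansion waves. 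Since the frontier shrinks below $q$ at most a bounded number of times per level (as the wave transitions), and each such episode costs at most $O(q)$ wasted \texttt{ApproxDeleteMin} operations per useful update it is waiting on, one accumulates $O(q)$ waste per problematic step, with $O(Hq)$ problematic steps distributed across the $H$ levels, giving the $O(Hq^2)$ overhead. The main obstacle here is making the phase/level accounting precise: I must carefully bound how many times the frontier can dip below $q$ and charge the wasted updates to these episodes without double-counting, using the $q$-fairness property to ensure that once a frontier edge becomes top priority it is returned within $q$ steps.

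For the \textbf{bad case}, the goal is an explicit construction forcing $\Omega(qn)$ updates with height $H = o(n)$. The natural candidate is a ``caterpillar'' or star-of-paths: take roughly $q$ disjoint paths (or a structure with $\Theta(q)$ leaves) of length $\Theta(n/q)$, hung off the root, so that $H = \Theta(n/q) = o(n)$ when $q = \omega(1)$. The idea is to keep the frontier size small — specifically below $q$ — throughout the execution, so that the second bullet of the frontier analysis ($F(t) < q \Rightarrow$ up to $q$ \texttt{ApproxDeleteMin} operations per useful update) is triggered at every useful step. I would design the factors so that the residual propagates along each path essentially one edge at a time (the frontier never grows large because each node has a single child along its path), keeping $F(t) = O(1)$ or at least $F(t) < q$ at all times. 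Then the adversarial scheduler, exploiting the rank bound, can return a zero-residual message up to $q-1$ times before being forced to advance a useful update, paying $\Theta(q)$ wasted updates for each of the $\Theta(n)$ useful updates, yielding $\Omega(qn)$ total.

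The main obstacle in the bad case is simultaneously satisfying three constraints: keeping $H = o(n)$, keeping the frontier small enough that $F(t) < q$ persists, and ensuring the adversary can legitimately stall for $\Theta(q)$ steps without violating the $q$-fairness guarantee. The tension is that a long thin path keeps the frontier small but forces $H = \Theta(n)$, whereas bushy trees give small $H$ but large frontiers. I expect the resolution is to balance width and depth — using $\Theta(q)$ parallel paths each of length $\Theta(n/q)$ — so that at any moment only a few edges (one per active path, but arranged by the adversary so that the \emph{useful} frontier presented to any single \texttt{ApproxDeleteMin} window stays below $q$) carry nonzero residual. Verifying that the adversarial schedule is consistent with both the rank bound and $q$-fairness is the delicate part: I must exhibit a concrete sequence of \texttt{ApproxDeleteMin} return values that inverts each useful edge exactly $q-1$ times but never more, so the construction is tight and admissible.
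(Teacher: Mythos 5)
Your good case follows essentially the same route as the paper's proof: a level-by-level induction in which no waste occurs while at least $q$ messages of the current level remain in the frontier, and each of the fewer than $q$ remaining messages costs at most $O(q)$ wasted operations once the level is nearly exhausted, giving $O(q^2)$ waste per level and $n + O(Hq^2)$ in total. The only thing you gloss over (and the paper states explicitly) is \emph{why} the levels complete in order: with identical non-deterministic factors, residuals are monotonically decreasing in depth, so level-$\ell$ messages always dominate deeper frontier messages in priority; your ``BFS wavefront'' phrasing implicitly relies on this.

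The bad case, however, has a genuine gap, located exactly at the ``tension'' you flagged. Your construction---$\Theta(q)$ paths of length $\Theta(n/q)$ hung off the root---has height $H = \Theta(n/q)$, which is $o(n)$ only if $q = \omega(1)$ as a function of $n$. But $q$ is a fixed parameter of the scheduler (in practice $q = O(p \log p)$ for $p$ threads), and the lemma requires, for that given $q$, a tree family with $H = o(n)$; for constant $q$ your height is $\Theta(n)$, so the statement is not established. The paper resolves the width/depth tension by a qualitatively different mechanism: it takes a main path of length $\sqrt{n}$, attaches a side path of length $\sqrt{n}$ to every main-path vertex (plus a pendant leaf at each degree-2 node), and---crucially---chooses the \emph{edge factors} so that residuals on the side paths strictly dominate residuals on the main path. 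The frontier then stays of size at most $4$ not because the tree has few branches (it has $\sqrt{n} \gg q$ side paths) but because the priority structure forces residual BP to traverse one side path at a time while the low-priority main-path front is starved; the adversary wastes $\Omega(q)$ zero-residual updates per useful update, giving $\Omega(qn)$ total work with $H = O(\sqrt{n})$ for \emph{every} fixed $q$.

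There is also a secondary problem with your stalling argument itself. With $\Theta(q)$ simultaneously active path fronts of equal priority, the claim that the adversary can waste $q-1$ operations per useful update does not clearly survive the fairness guarantee: every return of a zero-residual message is a priority inversion charged against \emph{each} higher-priority element currently in the queue (and in the model of Section 4, ties make ``the highest priority task'' ambiguous, so all tied fronts may be protected). Hence with $\Theta(q)$ frontier elements the inversion budgets drain in parallel, and the admissible waste can collapse to $O(1)$ per useful update, i.e.\ $O(n)$ in total rather than $\Omega(qn)$. A clean stalling argument needs the frontier to contain essentially a single protected top-priority element at any time, with everything else strictly below it---which is precisely what the paper's choice of factors engineers, and what your equal-priority parallel paths do not.
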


\begin{proof}

\emph{Good case: uniform expansion.}
As the first case, we consider the tree model in the case where the edge factors $\psi_{ij}$ are identical for all edges and not \emph{deterministic}, i.e. $\psi_{ij}(x_i,x_j) \ne 0$ for all $\{i, j\}$. Let us say that the \emph{level} of a message $\mu_{i \to j}$ is $\ell$ if the distance from $i$ to the root $r$ is $\ell$. The conditions we imposed our Markov random field, together with the update rule (\ref{eq:bp}), imply that the residuals of the messages are decreasing in the level $\ell$ of the message, and all messages on level $\ell$ will have the same residual when they are in the frontier. This means that residual schedule will prefer updating messages on lower levels first.

Now consider the progress of the relaxed residual belief propagation on this tree; let $H$ denote the height of the tree. Now assume that all messages on levels $0, 1, \dotsc, \ell - 1$ have had a useful update, and consider how many wasted updates we can make in the worst case before all messages on level $\ell$ have been processed. Let $f$ denote the number of message on level $\ell$ still in the frontier:
\begin{itemize}
    \item While $f \ge k$, there are at least $k$ messages of level $\ell$ on the frontier. Since they have the highest residual out of the messages in the frontier, each update is a useful update of a message on level $\ell$.
    \item When $f < k$, there can be updates that do not hit messages on level $\ell$, which can possibly be wasted updates. However, the highest-priority messages are still from level $\ell$, so every $k$th update will hit a message on level $\ell$ by the guarantees of the scheduler. Thus, in $(k-1)f = O(k^2)$ updates, all remaining messages on level $\ell$ have been processed.
\end{itemize}
Since there can be at most $n - 1$ useful updates, and the number of levels is $H - 1$, the total number of updates performed by relaxed residual belief propagation is $n + O(Hk^2)$.

\emph{Bad case: long paths.} A simple example where relaxed residual belief propagation performs poorly is a path. That is, if our underlying tree is a path of length $n$ with a root at one end, then relaxed residual belief propagation can perform $\Omega(kn)$ message updates in the worst case. However, the path has height $H = n$, so one might ask if there is a general upper bound of form $n + O(Hk^2)$ on trees without restricting the edge factors as in our previous example.

Unfortunately, turns out that without the restrictions above, we can construct examples of trees with height $H = o(n)$ where relaxed residual belief propagation still performs $\Omega(kn)$ message updates (see Figure~\ref{fig:hard-instance} for an illustration):
\begin{enumerate}[nosep]
    \item Start with a path of length $\sqrt{n}$, with a root at one end.
    \item Attach a new path of length $\sqrt{n}$ to each vertex.
    \item For each remaining degree-2 node in the graph, attach a single new node to it.
\end{enumerate}

\begin{figure}
\centering
\includegraphics[width=0.5\linewidth]{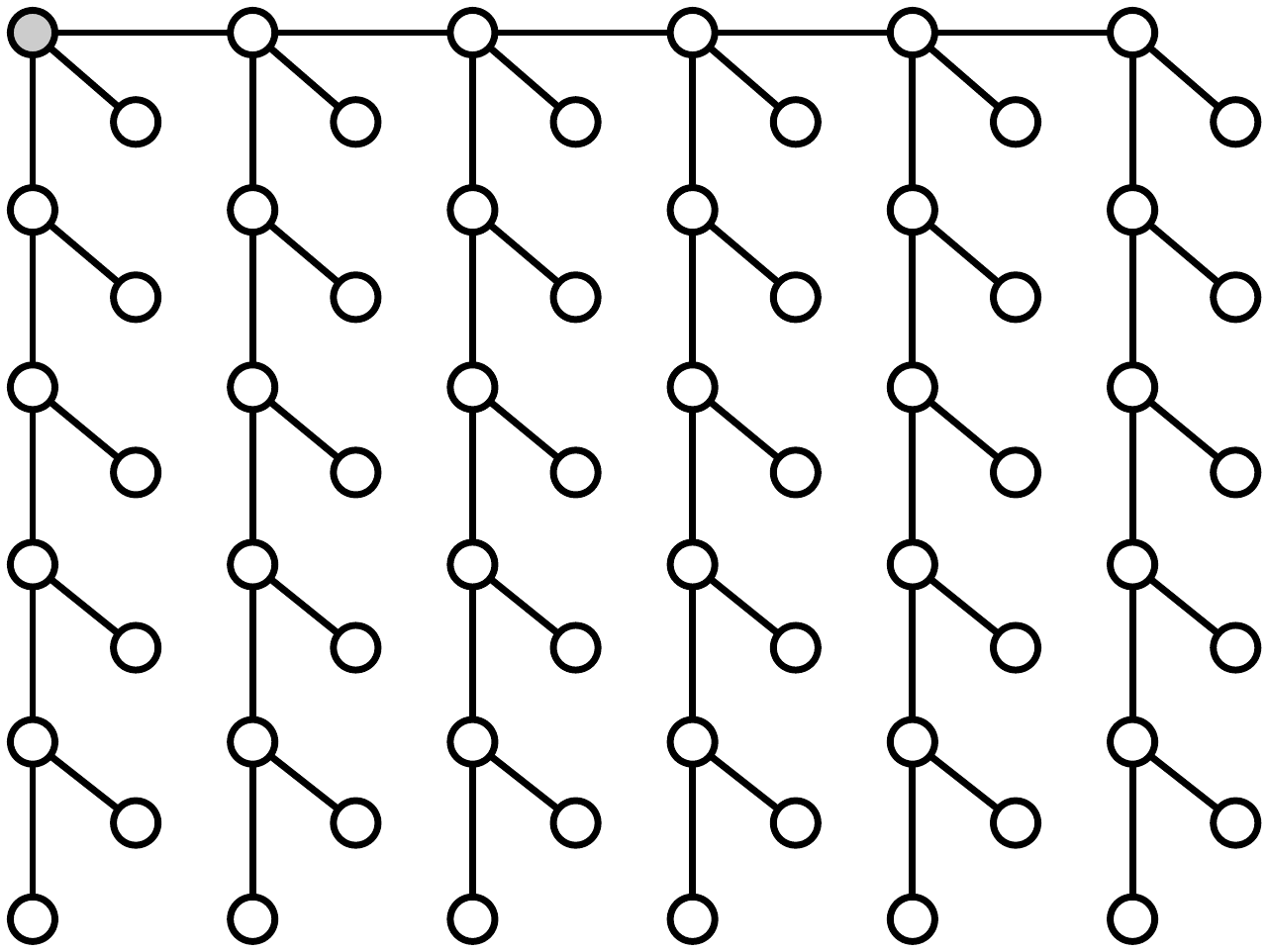}
\caption{\boldmath Example of the tree where relaxed residual belief propagation performs poorly.}\label{fig:hard-instance}    
\end{figure}

This construction results in a $3$-regular rooted tree with $\Theta(n)$ nodes and depth $H = O(\sqrt{n})$. Finally, we choose the edge factors so that residuals on the side paths are larger than the residuals on the main path, so residual belief propagation will prefer following the side paths first. 

One can now observe that under the adversarial model for the relaxed scheduler, the adversary can select the execution of the relaxed scheduler so that the frontier size never exceeds $4$. That is, adversary forces the algorithm to process the graph one side path at time, wasting $k-1$ steps between each useful update.

Finally, we note that the same construction can be generalized to obtain instances with similar relaxation overhead  and diameter $O(n^{1/c})$ for larger constants $c < k$, by simply working with paths of length $n^{1/c}$ and repeating the path attachment step $c$ times.

\end{proof}

\begin{remark}
As suggested by the above examples, one might consider changing the priority function to preferentially select messages closer to the source. This can lead to improved work guarantees for the relaxed schedule. Indeed, we discuss one concrete example in Section~\ref{section:optimal-schedule}, where we show how to relax the optimal schedule on trees. 
However, it is not straightforward to construct such priority functions so that they also make sense on general graphs, which can have non-monotonic potentials and cycles.
\end{remark}

\paragraph{Discussion.} 
To interpret the results, first note that, in practice, the relaxation factor is in the order of $p$, the number of threads, and that $H$ is usually small (e.g., logarithmic) w.r.t. the total number of baseline updates $n$. 
Thus, in the good case, the $O( q^2 H )$ overhead can be seen as negligible: as $p$ iterations occur in parallel, the average time-to-completion should be $n / p + O( qH )$, which suggests almost perfect parallel speedup. 
At the same time, our worst-case instances shows that relaxed residual BP is not a ``silver bullet:'' there exist tree instances where it may lead to $\Omega(qn)$ message updates, i.e. asymptotically no speedup.
The next section shows experimentally that such worst-case instances are unlikely.


\section{Evaluation}\label{section:experiments}


We now empirically test the performance of the relaxed priority-based algorithms, comparing it against prior work.
For the experiments, we have implemented multiple priority-based algorithms and instantiated them with both exact and relaxed priority schedulers.

\subsection{Algorithms}

We implemented several variants of sequential belief propagation, including synchronous (round-robin), residual, weight decay, and residual without lookahead;
These variants are described in Section~\ref{section:preliminaries}. For residual splash, we implemented two variants. The first is the standard splash algorithm, as given in~\cite{gonzalez2012powergraph}. The second is our own optimized version we refer to as \emph{smart splash}, which only updates messages along breadth-first-search edges during a splash operation. This  variant has similar convergence as the baseline residual splash algorithm, but performs fewer message updates and should be more efficient.

We include the following instantiations of the algorithms in the benchmarks, and compare them against the sequential residual algorithm.

\paragraph{Exact schedulers.} We include the exact version of residual belief propagation (\textsf{Coarse-Grained Residual} or \textsf{CG}) and the version of splash algorithm (\textsf{Splash} or \textsf{S}) with the best value of $H$ ($2$ and $10$). For these algorithms, the scheduler is a standard concurrent priority queue. We omit here the results of the smart splash algorithm since it works significantly longer than under the relaxed scheduler described further. The other exact priority-based algorithms are not included, as they generally perform worse on our test instances.

\paragraph{Relaxed schedulers.} We compare the above algorithms against the algorithms we propose, i.e. relaxed versions of residual belief propagation (\textsf{Relaxed Residual}), weight decay belief propagation (\textsf{Weight-Decay}), residual without lookahead (\textsf{Priority}) and smart splash (\textsf{Relaxed Smart Splash}) with the best value $H = 2$. 
For all these algorithms, the scheduler is a Multiqueue with 4 priority queues per thread, as discussed in Section~\ref{section:relaxation}, which we found to work best (although other values behave similarly).

\paragraph{Other schedulers.} We ran many possible concurrent variants for the baseline algorithms, and 
chose the four best to we compare against our relaxed versions. 
First, we choose the parallel version of the standard synchronous belief propagation (\textsf{Synch}). 
We omit some synchronous algorithms such as the randomized synchronous belief propagation of Van der Merve et al.~\citep{merwe2019} since they perform consistently worse (see Appendix~\ref{app:random-synchronous}).

We also include the randomized version of splash algorithm (\textsf{RS}) proposed in the journal version of the paper~\cite{pmlr-v5-gonzalez09a} with $H = 2$, which performed best. 
This algorithm uses a similar idea of relaxation, but, crucially, instead of a Multiqueue scheduler, they implement a naive relaxed queue where threads randomly insert and delete into $p$ exact priority queues. While this distinction may seem small, it is known~\cite{AKLN17} that this variant does not implement a $k$-relaxed scheduler for any $k$, as its relaxation factor grows (diverges) as more and more operations are performed, and therefore corresponds to picking tasks at random to perform.  
As we will see in our experimental analysis, this does result in a significant difference between the number of additional (wasted) steps performed relative to a relaxed priority scheduler. Finally, we note that we are the first to implement this algorithm.

Finally, we include the algorithm proposed by Yin and Gao~\cite{10.1145/2661829.2662081}, hereafter referred to as the bucket algorithm (\texttt{Bucket}). At each round until convergence, this algorithm takes the $0.1 \cdot |V|$ best vertices according to the Splash metric, and updates the messages from these vertices.

\subsection{Models} \label{section:models}

We run our experiments on four Markov random fields models.

\paragraph{Trees.} As a simple base case, we consider a simple tree model similar to the analytical setting in Section~\ref{section:theory}. The underlying graph is a full binary tree, and the other parameters are set up as follows:
\begin{itemize}
    \item All variables are binary, i.e. the domain is $\{0, 1\}$ for each variable.
    \item Vertex factors are $(0.1, 0.9)$ for the root and $(0.5, 0.5)$ for all other vertices.
    \item Edge factors are $\psi_{ij}(x, y) = \begin{cases}1, & x = y \\ 0, & x \neq y\end{cases}$ for all edges.
\end{itemize}
As discussed in Section~\ref{section:theory}, these choices create a setup where the belief propagation has to propagate information from the root to all other nodes. Thus, under an optimal schedule, the total number of performed updates is equal to the number of edges. Since we know that all algorithms will converge on this model, we run the algorithms until exact convergence. 

\paragraph{Ising and Potts models.} Ising and Potts models are Markov random fields defined over an $n \times n$ grid graph, arising from applications in statistical physics. Both of Ising~\citep{elidan2006residual,knoll2015message} and Potts~\citep{Sutton:2007:IDS:3020488.3020534} models were used in prior work as test case, and in general they offer a class of good test instances, as they both exhibit complex cyclic propagations and are easy to generate.

For the parameters of the models, we mostly follow prior work in the setup.
For the Ising model, we select the factors similarly to \citep{elidan2006residual, knoll2015message}:
\begin{itemize}
    \item The variable domain is $\{-1, 1\}$ for all variables.
    \item Vertex factors are $\psi_i(x) = e^{\beta_i x}$.
    \item Edge factors are $\psi_{ij}(x, y) = e^{\alpha_{ij} x y}$.
    \item The parameters $\alpha_{ij}$ and $\beta_i$ are chosen uniformly at random from $[-1, 1]$.
\end{itemize}
For the Potts model, we select the factors following~\citep{Sutton:2007:IDS:3020488.3020534}:
\begin{itemize}
    \item The variable domain is $\{0, 1\}$ for all variables.
    \item Vertex factors are $\psi_i(x) = \begin{cases}e^{\beta_i}, & x = 1 \\ 1, & x = 0\end{cases}$.
    \item Edge factors are $\psi_{ij}(x, y) = \begin{cases}e^{\alpha_{ij}}, & x = y\\ 1, & x \neq y\end{cases}$.
    \item The parameters $\alpha_{ij}$ and $\beta_i$ are chosen uniformly at random from $[-2.5, 2.5]$.
\end{itemize}
For both Ising and Potts models, we set the convergence threshold to $10^{-5}$. That is, we terminate algorithm once all task have priority below this threshold.

\paragraph{LDPC codes.} Finally, we generate Markov random fields corresponding to the $(3, 6)$-LDPC (\emph{low density parity check code}~\citep{ldpc}) decoding. LDPC decoding is one of the more successful application of belief propagation. We consider a simple version of LDPC decoding task where convergence guarantees exist~\citep{richardson2001capacity}. However, we stress that coding theory is its own extensive research area, and far more optimized codes and decoding algorithms exist in practice---we simply use LDPC decoding to observe the comparative scaling behavior of our implementations on instances where synchronous belief propagation is guaranteed to converge. For a more detailed background on LDPC decoding and other aspects of coding theory, refer e.g.~to the book~\citep{richardson2008modern}.

More precisely, we consider $(3,6)$-LDPC decoding over a  binary symmetric channels. Informally, a $(3,6)$-LDPC code is a $(3,6)$-regular bipartite graph, where each degree $3$ node corresponds to a binary \emph{variable} and each degree $6$ node corresponds to a \emph{constraint} of form $x_{i_1} + x_{i_2} + \dotsc + x_{i_6} = 0$ over the neighboring variables $x_{i_1}, x_{i_2}, \dotsc, x_{i_6}$. Each sequence of variables that satisfies the all the constraints is \emph{codeword} of the code. The basic setup is then that we send a codeword over a \emph{channel} that flips each bit with probability $\varepsilon$, and the receiver will run belief propagation and use results of marginalization to infer the original codeword.

For our experiments, we build a $(3,6)$-LDPC instance with $2n$ variable nodes and $n$ constraint nodes by selecting a random $(3,6)$-regular bipartite graph, and initialize the node factors corresponding to the all-zero codeword sent over binary symmetric channel with error probability $\varepsilon = 0.07$. Under these conditions, belief propagation is guaranteed to correctly decode the instance with high probability~\citep{richardson2001capacity}; indeed, all the algorithms that converged decoded the codeword correctly in our experiments. The codeword length was again selected to get roughly comparable baseline running times as for the other instances.

Concretely, we get Markov random field where the underlying graph is a random bipartite graph with $3n$ nodes. For each variable node $i$, let $x_i \in \{ 0, 1 \}$ be the `transmitted' value of the variable, randomly generated to be $1$ with probability $\varepsilon$ and $0$ otherwise. The factors have the following structure:
    \begin{itemize}
        \item The domains of variable nodes are binary domains $\{0, 1\}$. For the constraint nodes, the domain is $\{ 0, 1 \}^6$---different bit masks of length $6$.
        \item The node factors for variable nodes are
        \[\psi_i(y) = \begin{cases}1 - \epsilon, & y = x_i \\ \epsilon, & y \ne x_i. \end{cases}\]
        For the constraint nodes, the node factor $\psi_c(y)$ is equal to the number of ones in $y \in \{ 0,1 \}^6$ modulo $2$; this effectively penalizes any value that does not satisfy the constraint.
        \item Edge factors $\psi_{i c}(x, y)$ is one if the corresponding bit in the $y \in \{ 0,1 \}^6$ equals $x \in \{ 0, 1 \}$, and is zero otherwise.
    \end{itemize}    
For the LDPC instances, we set the convergence threshold to $10^{-2}$ to ensure fast convergence; this approximates the behavior of actual LDPC decoders. 

\subsection{Methodology}

For each pair of algorithm and model, we run each experiment five times, and average the execution time and the number of performed updates on the messages. We executed on a 4-socket Intel Xeon Gold 6150 2.7 GHz server with four sockets, each with 18 cores,  and 512GB of RAM. The code is written in Java; we use Java 11.0.5 and OpenJDK VM 11.0.5. Our code is available at \href{https://github.com/IST-DASLab/relaxed-bp}{\footnotesize\textsf{https://github.com/IST-DASLab/relaxed-bp}}. 
Our code is fairly well optimized---in sequential executions it outperforms the C++-based open-source framework of libDAI~\cite{Mooij_libDAI_10} by more than 10x, and by more that 100x with multi-threading. 

To verify the accuracy of inference, we confirmed that all algorithms were able to recover the known correct codeword used to generate the LDPC code instance as expected. On the Ising model, all algorithms converged to almost identical marginals; on the Potts model, some non-relaxed algorithms had larger differences due to poor convergence.

\subsection{Result: moderate input sizes}

\begin{table}[t]
\centering
\resizebox{\linewidth}{!}{%
\begin{tabular}{lcccccccccccc}
\toprule
  & & \multicolumn{4}{c}{Prior Work} & \multicolumn{4}{c}{Relaxed} \\
  \cmidrule(lr){3-6}   \cmidrule(lr){7-10} 
Input & Residual & Synch & Coarse-G & Splash (10) & RS (2)  & Residual & Weight-Decay & Priority & Smart Splash (2)  \\
\midrule
Tree  & 1.30 min & 2.538x  & 0.265x & 1.648x      & 2.252x  & 1.391x  & 1.282x & 1.239x & 2.121x  \\
Ising & 2.76 min & 3.009x  & 0.801x & 5.393x      & 11.731x & 6.720x  & 6.276x & 5.759x & 14.175x \\
Potts & 3.02 min & ---     & 0.624x & 1.041x      & 11.855x & 7.454x  & 5.978x & 5.850x & 15.235x \\
LDPC  & 4.62 min & 17.735x & 1.166x & ---         & 5.150x  & 13.393x & 5.615x & ---    & 10.519x \\
\bottomrule
\end{tabular}
}

\caption{Algorithm speedups with respect to the sequential residual algorithm. Higher is better.}
\label{table:moderate:time}
\end{table}

\begin{table}[t]
\resizebox{\linewidth}{!}{%
\begin{tabular}{lcccccccccccc}
\toprule
  & & \multicolumn{4}{c}{Prior Work} & \multicolumn{4}{c}{Relaxed} \\
  \cmidrule(lr){3-6}   \cmidrule(lr){7-10} 
Input & Residual & Synch & Coarse-G & Splash (10) & RS (2)  & Residual & Weight-Decay & Priority & Smart Splash (2)  \\
\midrule
Tree  &10M   & 48.000x   & 1.003x   & 16.442x     & 8.344x  & 1.020x & 1.012x & 3.657x & 2.565x\\
Ising &25.3M & 45.006x   & 1.003x   & 9.266x      & 5.787x  & 1.058x & 1.068x & 1.816x & 1.878 \\
Potts &30M   & ---       & 1.006x   & 9.005x      & 5.983x  & 1.068x & 1.053x & 1.791x & 1.891x \\
LDPC  &7.23M & 4.404x    & 1.003x   & ---         & 4.089x  & 1.007x & 0.883x & --- & 0.973x \\
\bottomrule
\end{tabular}
}
\caption{Total updates relative to the sequential residual algorithm at 70 threads. Lower is better.}
\label{table:moderate:updates}
\end{table}

For basic performance tests, we run our experiments on four MRFs of moderate size: a binary tree of size $10^7$, an Ising model~\citep{elidan2006residual,knoll2015message} of size $10^3 \times 10^3$, a Potts~\citep{Sutton:2007:IDS:3020488.3020534} of size $10^3 \times 10^3$ and the decoding of $(3, 6)$-LDPC code~\citep{richardson2008modern} of size $3 \cdot 10^5$. The sizes of the inputs are chosen such that their execution is fairly long while the data still can fit into RAM.
We only present the most relevant baselines and best-performing algorithms here; see Appendix~\ref{app:experiments-moderate-all} for more algorithms.

We run the baseline residual belief propagation algorithm on one process since it is sequential, while all other algorithms are concurrent and, thus, are executed using $70$ threads. The execution times (speedups) relative to the sequential baseline are presented in Table~\ref{table:moderate:time}. Each cell of the table shows how much faster the corresponding algorithm works in comparison to the sequential residual one, i.e., higher is better; ``---'' means that the execution did not converge within a reasonable limit of time.

On trees, the fastest algorithm is, predictably, the synchronous one, since on tree-like models with small diameter $D$ it performs only approximately $O(D)$ times more updates in comparison to the sequential baseline, while being almost perfectly parallelizable. 
On Ising and Potts models, the best algorithm is Relaxed Smart Splash (RSS) with $H = 2$. The algorithm closest to it is Random Splash with $H = 2$, which is $20-30\%$ slower. For LDPC decoding, which is a locally tree-like model, the best-performing is again the Synchronous algorithm. 
We note the good performance of the relaxed residual algorithm, as well as of RSS, and the relatively poor performance of Random Splash, due to high numbers of wasted updates. 
Examining Table~\ref{table:moderate:updates}, we notice in general the relatively low number of wasted updates for relaxed algorithms. 
In summary, the choice of algorithm can depend on the model;  however, one may choose Relaxed Smart Splash since it performs well on all our models, while Relaxed Residual is fairly fast and simple to implement.

\subsection{Results: small input sizes}

In this subsection, we decrease the size of the inputs to enable more detailed scaling studies. Now, Tree model has $10^6$ vertices, Ising and Potts models are built on top of $300 \times 300$ grid graph, and, finally, LDPC model is set up with $30\,000$ length of the input vector. In general, we simply reduce the sizes of the models by approximately $10$.

\begin{figure*}[p]
\begin{center}
\includegraphics[width=\linewidth]{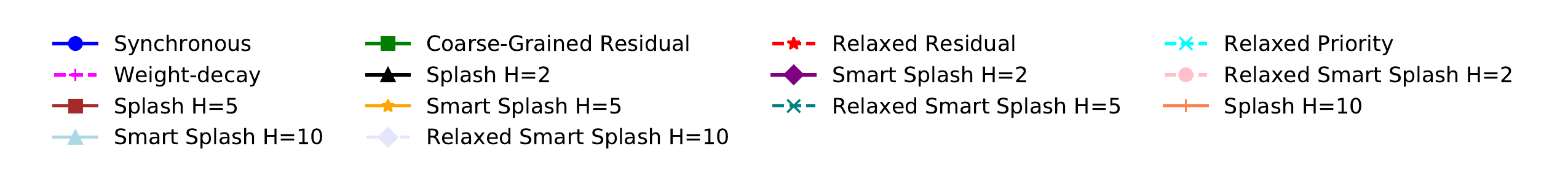}
\end{center}
\begin{adjustbox}{minipage=\linewidth}{
\begin{subfigure}{.5\textwidth}
  \centering
  \includegraphics[width=\textwidth]{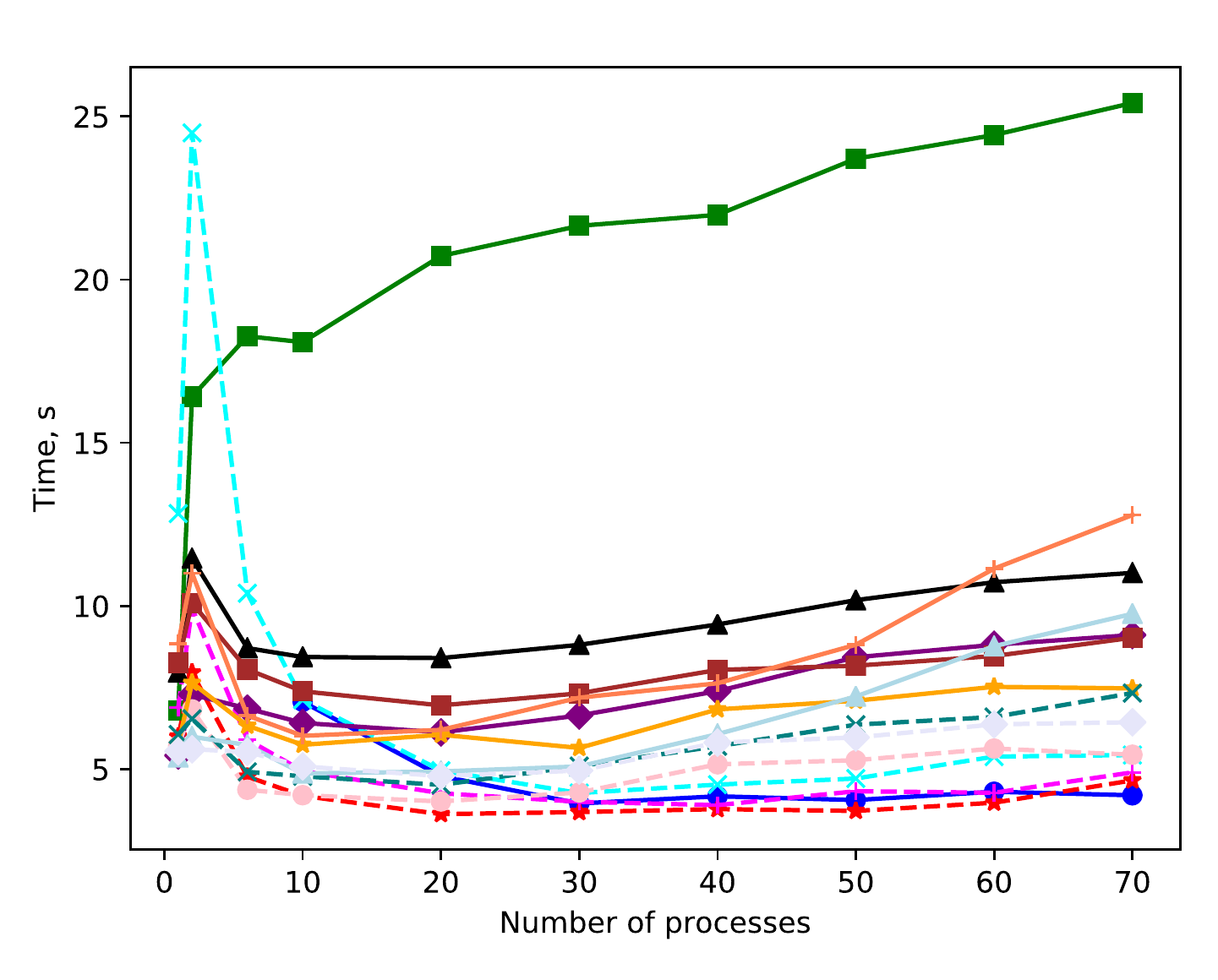}
  \vspace{-2em}
  \caption{Execution time}
  \label{fig:tree:time}
\end{subfigure}%
\begin{subfigure}{.5\textwidth}
  \centering
  \includegraphics[width=\textwidth]{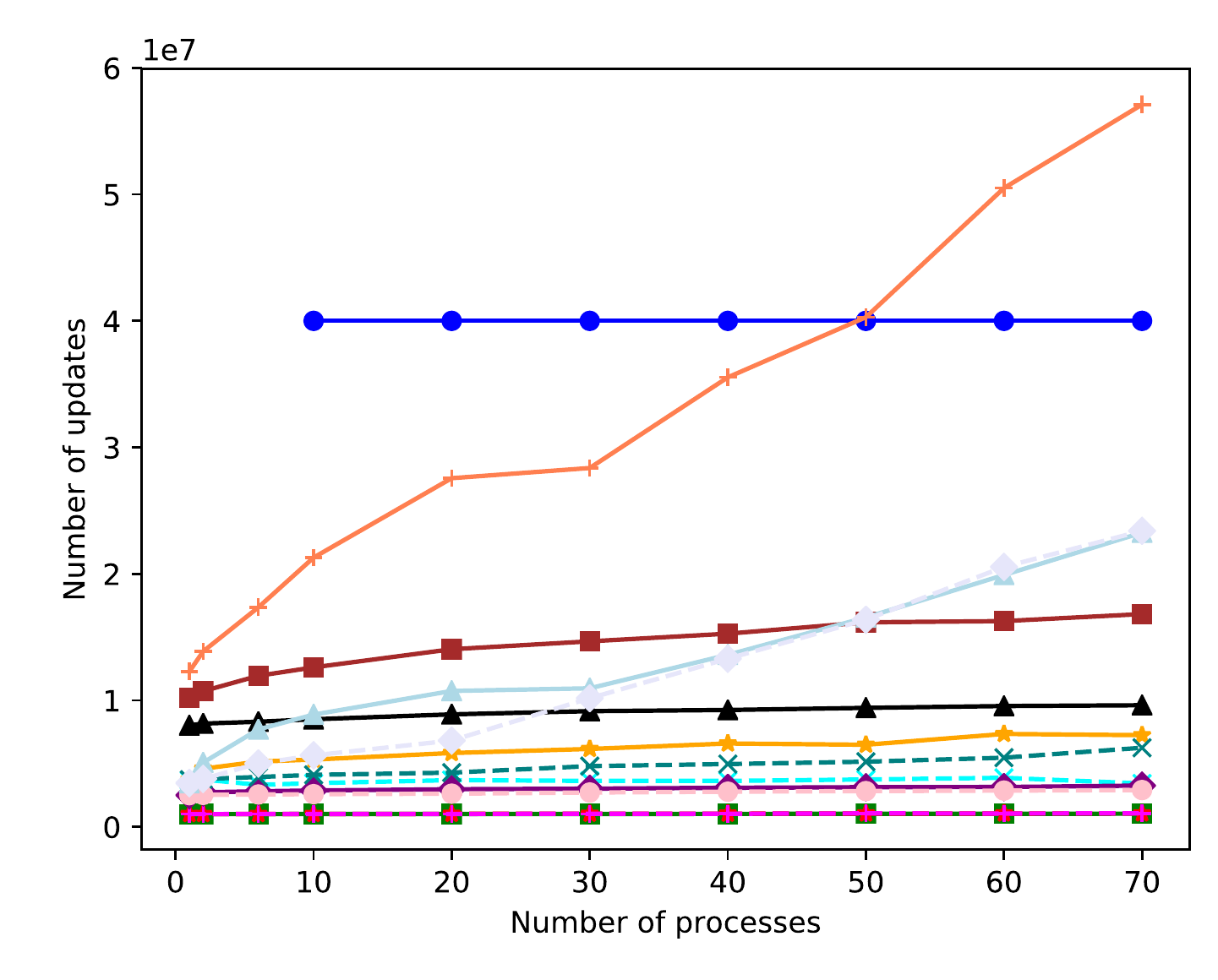}
  \vspace{-2em}
  \caption{Number of updates}
  \label{fig:tree:updates}
\end{subfigure}
\vspace{-0.2cm}
\caption{The results of the evaluation of the algorithms on the Tree model}
}
{
\begin{subfigure}{.5\textwidth}
  \centering
  \includegraphics[width=\textwidth]{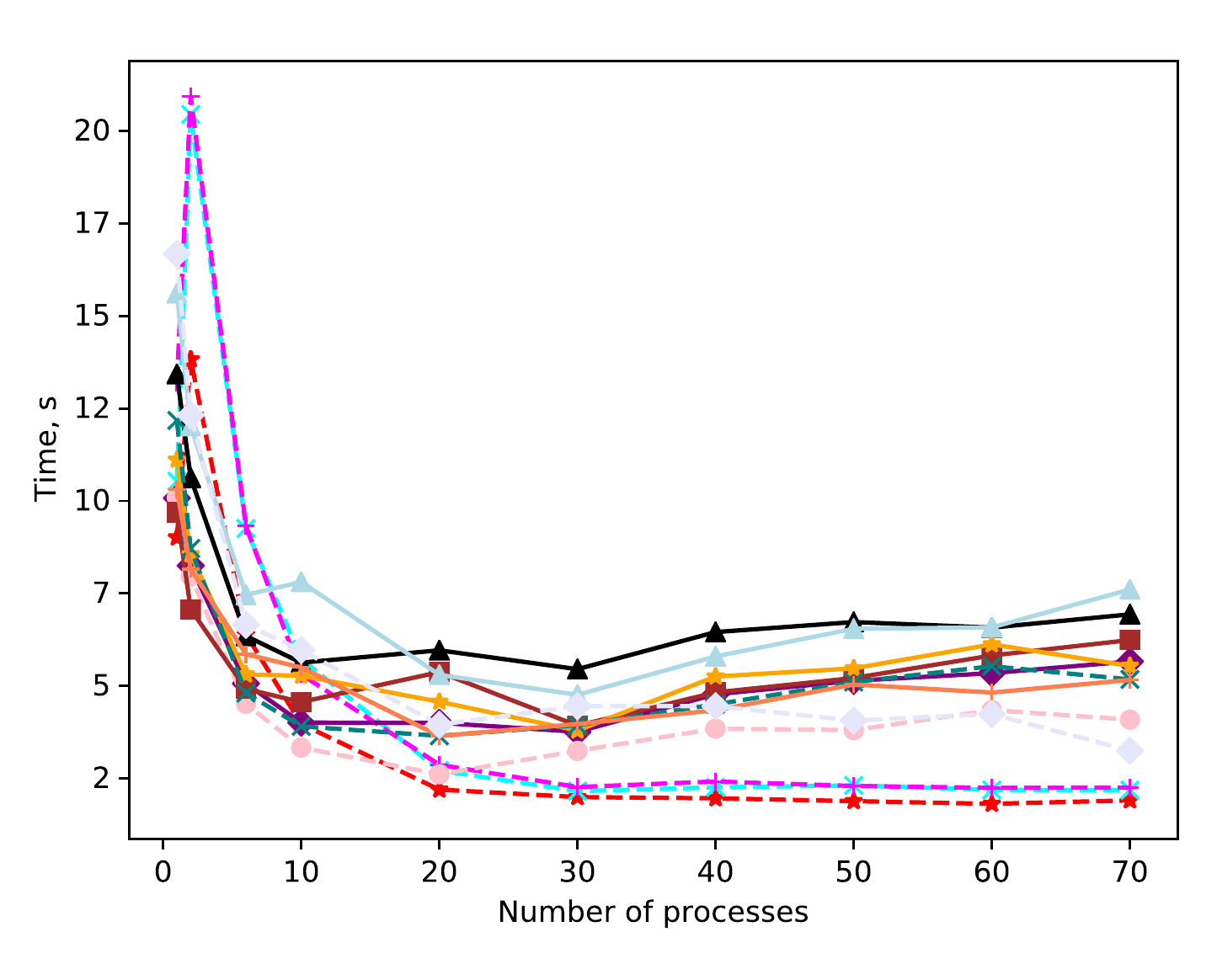}
  \vspace{-2em}
  \caption{Execution time}
  \label{fig:ising:time}
\end{subfigure}%
\begin{subfigure}{.5\textwidth}
  \centering
  \includegraphics[width=\textwidth]{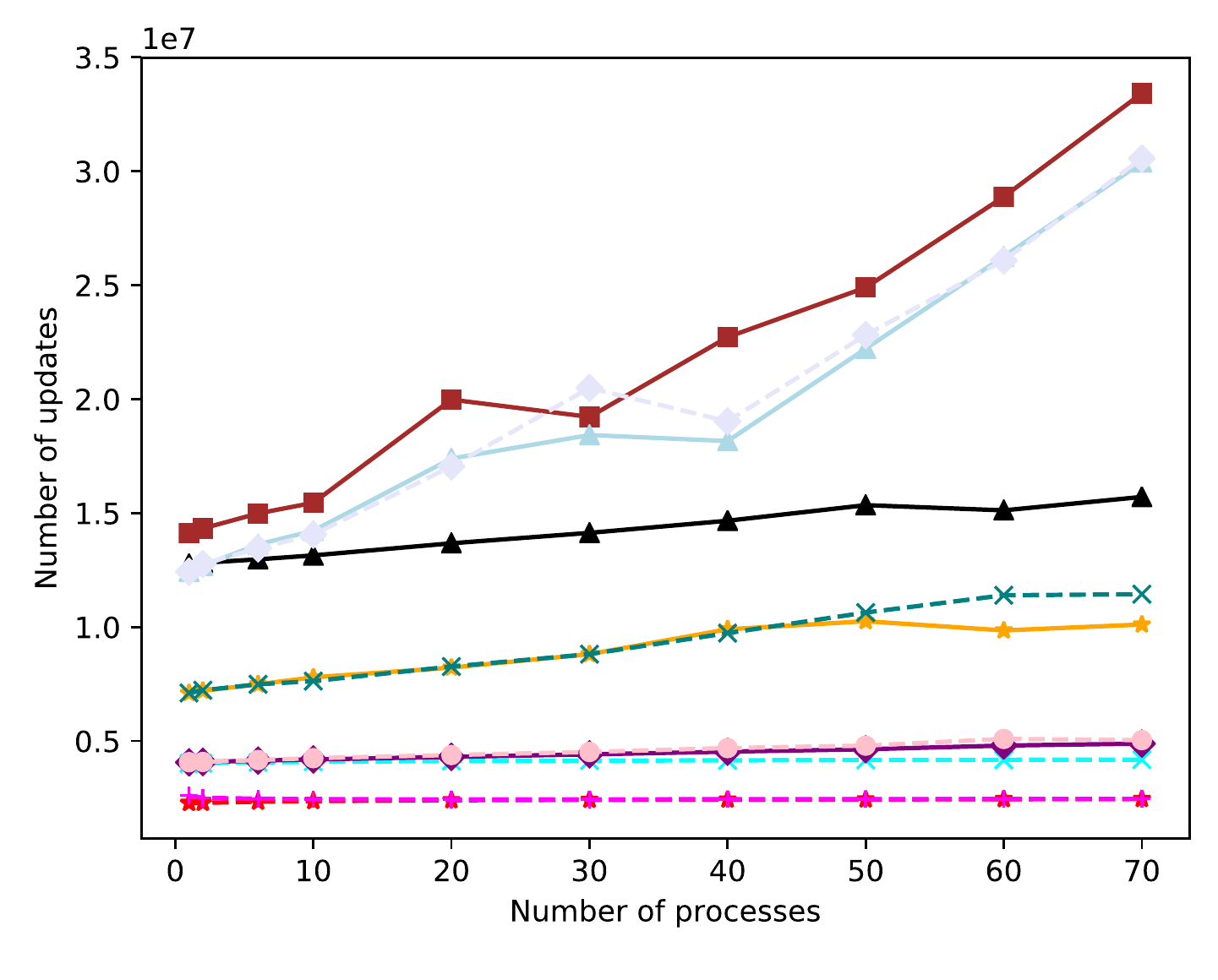}
  \vspace{-2em}
  \caption{Number of updates}
  \label{fig:ising:updates}
\end{subfigure}
\vspace{-0.2cm}
\caption{The results of the evaluation of the algorithms on Ising model}
}
\end{adjustbox}
\end{figure*}

\begin{figure*}[p]
\begin{center}
\includegraphics[width=\linewidth]{legend.pdf}
\end{center}

\begin{adjustbox}{minipage=\linewidth}{
\begin{subfigure}{.5\textwidth}
  \centering
  \includegraphics[width=\textwidth]{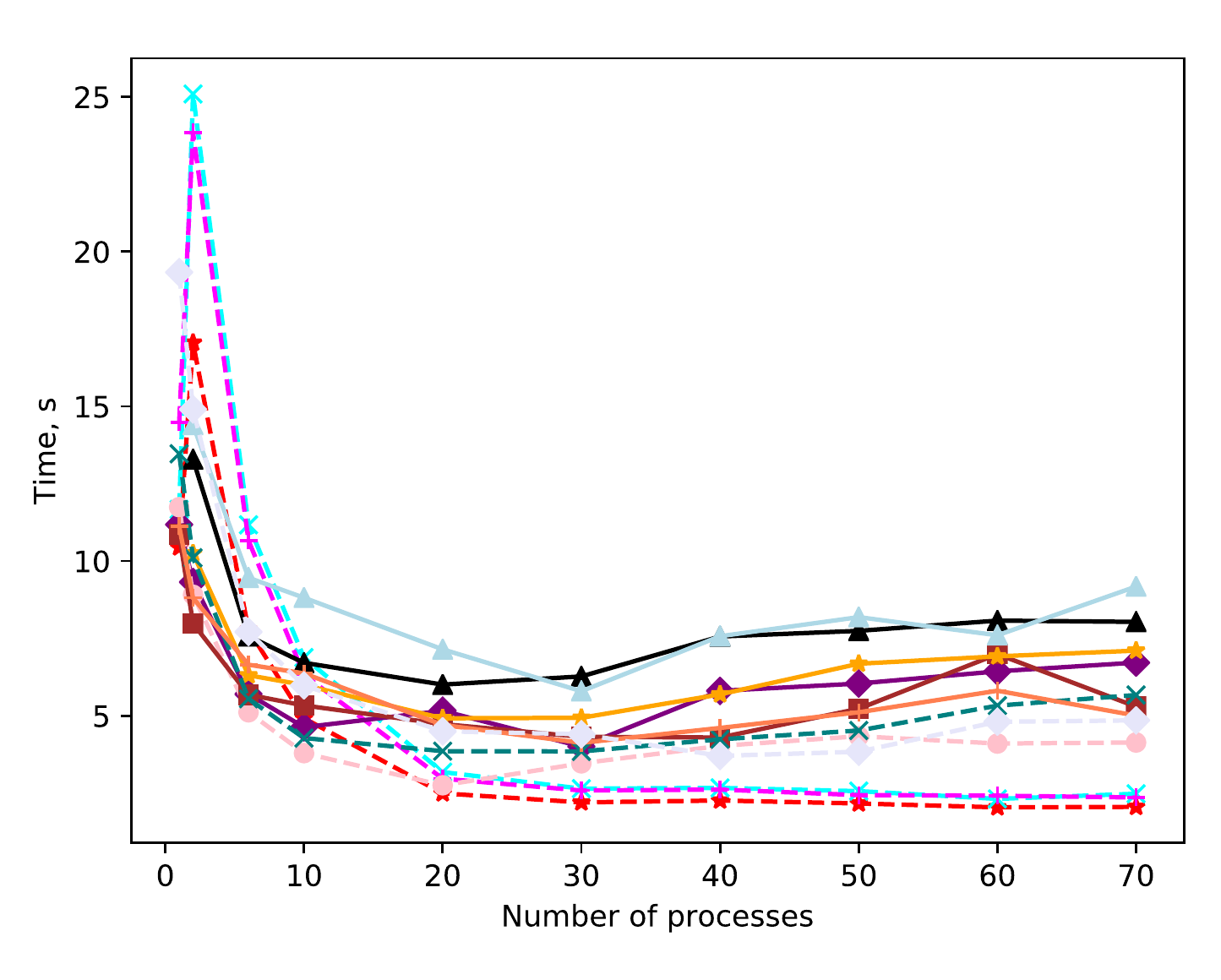}
  \vspace{-2em}
  \caption{Execution time}
  \label{fig:potts:time}
\end{subfigure}%
\begin{subfigure}{.5\textwidth}
  \centering
  \includegraphics[width=\textwidth]{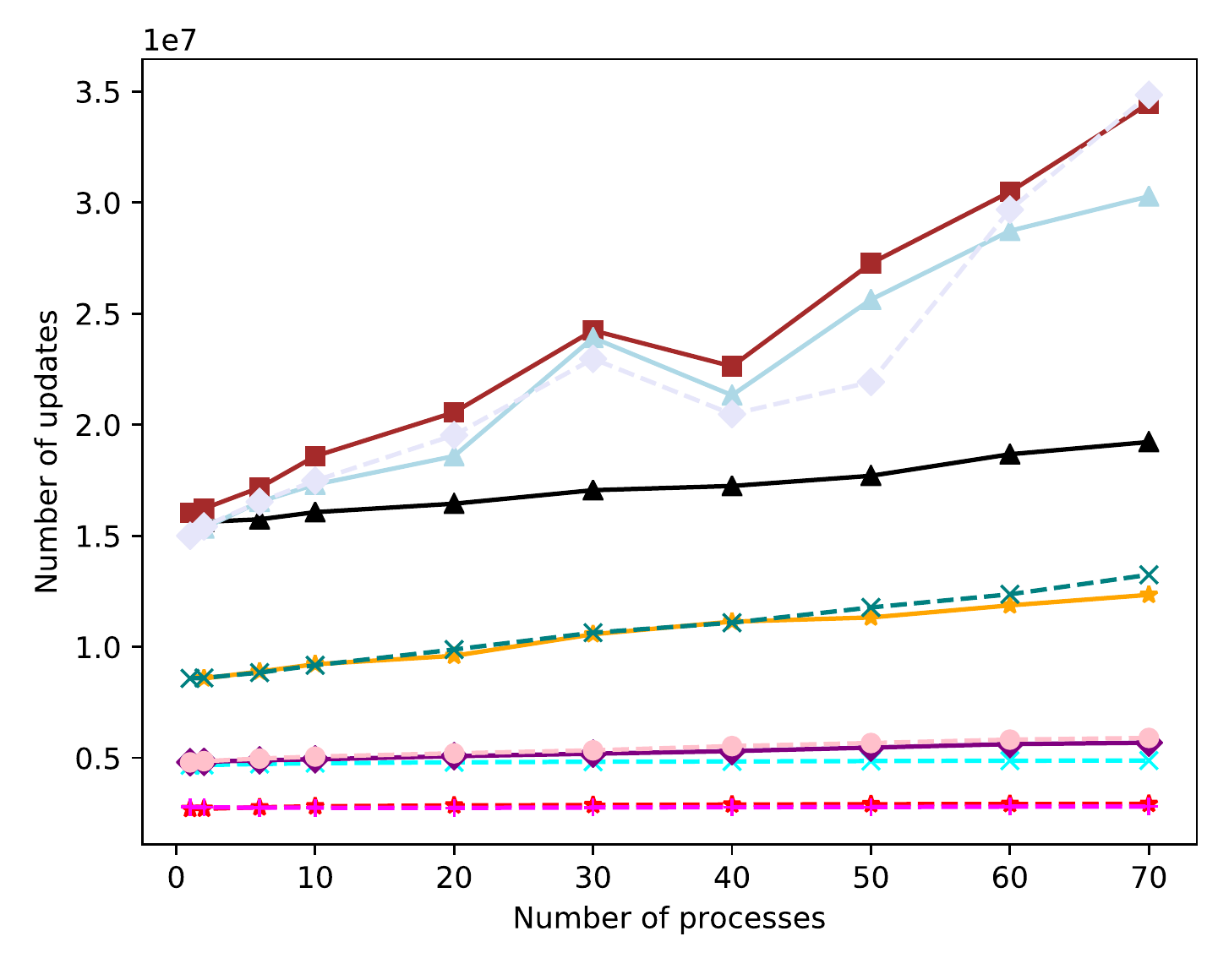}
  \vspace{-2em}
  \caption{Number of updates}
  \label{fig:legend}
\end{subfigure}
\vspace{-0.2cm}
\caption{The results of the evaluation of the algorithms on Potts model}
}
{
\begin{subfigure}{.5\textwidth}
  \centering
  \includegraphics[width=\textwidth]{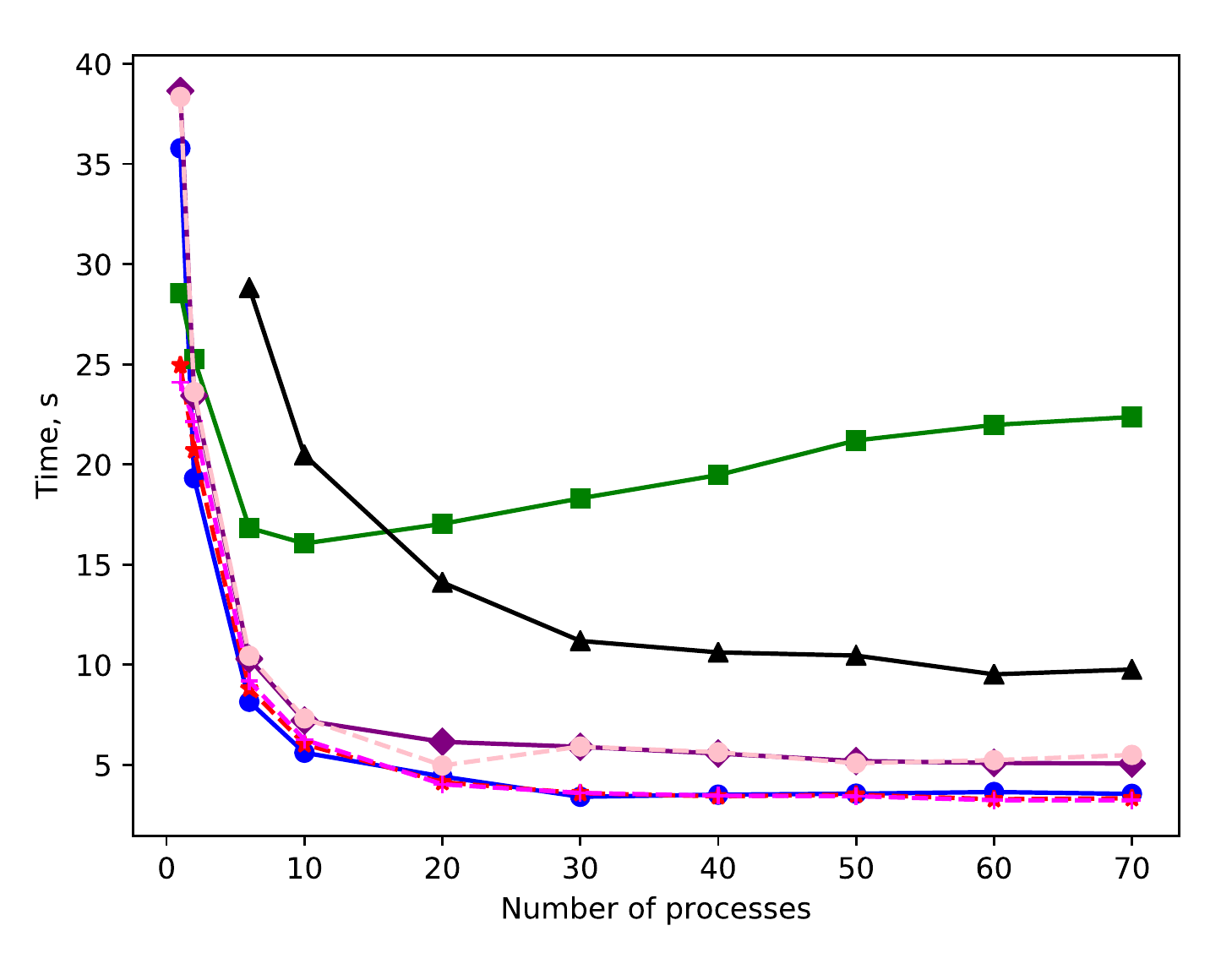}
  \vspace{-2em}
  \caption{Execution time}
  \label{fig:ldpc:time}
\end{subfigure}%
\begin{subfigure}{.5\textwidth}
  \centering
  \includegraphics[width=\textwidth]{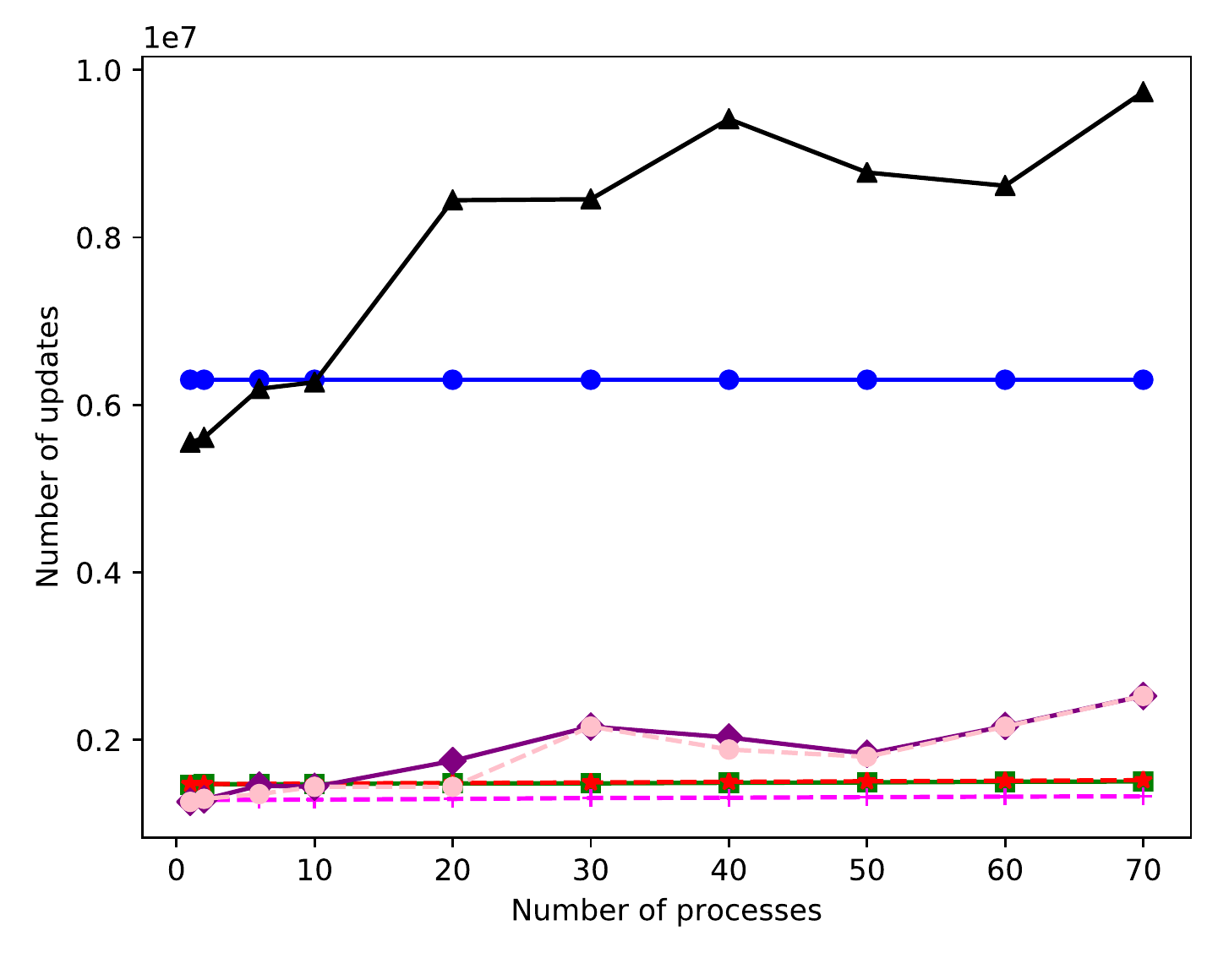}
  \vspace{-2em}
  \caption{Number of updates}
  \label{fig:ldpc:updates}
\end{subfigure}
\vspace{-0.2cm}
\caption{The results of the evaluation of the algorithms on decoding LDPC code}
}
\end{adjustbox}
\end{figure*}

\paragraph{How to read the plots.} There are two types of plots per each model: the first shows the execution time of the algorithms, while the other one shows the number of updates performed. 
On the $x$ axis we have the number of threads the algorithms were run on, while on the $y$ axis we have: the time in seconds (for time plots) and the number of updates (for update plots). The dashed lines on the plots correspond to the algorithms that use a relaxed scheduler, while the others use either no concurrent scheduler, or an exact priority queue.

Whenever we have omitted algorithms from the plots or display incomplete data, this indicates poor performance for that algorithm on the metric displayed on the graph: either the algorithm did not converge or the values exceed the limit of the plot.

\paragraph{Tree model.}
As one can observe on the time plot (Figure~\ref{fig:tree:time}), the three algorithms with the best scaling on the tree instance are the synchronous belief propagation, relaxed residual and the weight-decay algorithm. For the relaxed algorithms, this mirrors our theoretical analysis from Section~\ref{section:theory}: as can be seen from Figure~\ref{fig:tree:updates}, the relaxation incurs very low overhead in terms of additional updates, while the overhead from parallelization is also low. By contrast, the exact residual belief propagation performs exactly the minimum number of updates needed, but scales very badly due to the contention on the priority queue.

We note that on the tree instance, the synchronous belief propagation also scales very well when parallelized. The amount of work can be split evenly between the threads, and only $O(\log n)$ synchronous rounds are required for convergence. 

\paragraph{Ising and Potts model.}
Ising and Potts models represent more challenging instances with lots of cycles, and are generally thought to be more representative of hard general graph instances for belief propagation. As can be seen in Figures~\ref{fig:ising:time} and~\ref{fig:potts:time}, relaxed algorithms perform consistently well on these instances, with relaxed residual belief propagation giving consistently the fastest convergence. These are followed by the exact splash algorithms, which generally perform slightly worse; however, the scaling seems to be somewhat sensitive to the choice of the parameter $H$. Both the synchronous and exact residual belief propagation are omitted, as the former did not consistently converge, and the latter was very slow.

An interesting insight is that the exact variants of splash and smart splash do not converge at all in single-threaded executions for some values of the parameter $H$, but always converge on two and more threads. Similarly, synchronous belief propagation, which has a fixed schedule, does not converge. By contrast, relaxed smart splash converged under all parameter values. We conjecture that this is due to the phenomenon observed by~\citep{knoll2015message}: exact priority-based algorithms may get stuck in non-convergent cyclic schedules, and injecting randomness into the schedule may help the algorithm to `escape' these situations. In particular, relaxation to the priority queue, i.e., sometimes executing low-priority items, can provide a such source of randomness. Similarly, an increase in the number of threads leads to the relaxation of the algorithm even for exact schedulers, as several messages are processed in parallel, not only the best one. Thus, we empirically observe that the randomness in the relaxation might help belief propagation to avoid bad cyclic schedules and, therefore, converge.

\paragraph{LDPC model.}
There are five algorithms that perform similarly (Figure~\ref{fig:ldpc:time}): synchronous belief propagation, relaxed residual belief propagation, the weight decay algorithm, relaxed smart splash with $H = 2$ and, finally, smart splash with $H = 2$. The other algorithms did not converge within our five minutes time limit per experiment. 

We note that synchronous belief propagation performs very well on this instance. This is not surprising, as standard belief propagation is known to perform well in LDPC decoding. Generally speaking, the necessary propagation chains seem to be very short on LDPC instances, and the synchronous algorithm parallelizes well in such cases.

\begin{table}
\begin{center}
{
\begin{tabular}{llcccc}
\toprule
 && \multicolumn{4}{c}{Message updates}\\
\cmidrule(lr){3-6}
                 & Threads & Tree    & Ising   & Potts   & LDPC    \\
\midrule
Exact  & 1       & 1000000 & 2279000 & 2700000 & 1464000 \\
\midrule
Relaxed & 1 & +0.14\%  & +0.11\% & -0.01\% & +0.55\% \\
                 & 2 & +0.26\%  & +0.24\% & +0.37\% & +0.57\% \\
                 & 6 & +0.56\%  & +2.50\% & +2.70\% & +0.64\% \\
                 & 10 & +0.92\% & +3.71\% & +4.45\% & +1.05\% \\
                 & 20 & +2.08\% & +5.27\% & +5.87\% & +1.41\% \\
                 & 30 & +2.90\% & +6.10\% & +6.56\% & +1.87\% \\
                 & 40 & +3.48\% & +6.52\% & +7.39\% & +2.35\% \\
                 & 50 & +5.04\% & +6.83\% & +7.92\% & +2.83\% \\
                 & 60 & +4.96\% & +7.39\% & +8.28\% & +3.20\% \\
                 & 70 & +5.74\% & +7.71\% & +8.53\% & +3.70\% \\
\bottomrule
\end{tabular}
}
\captionof{table}{Number of additional message updates performed by relaxed residual belief propagation compared to exact residual belief propagation.}
\label{table:effect}
\end{center}
\end{table}

\subsection{Results: the effects of relaxation}

\paragraph{Relaxation and the number of updates.} In Table~\ref{table:effect}, we measure how many more updates the relaxed residual algorithm needs to perform in comparison to the number of updates performed by the standard sequential residual algorithm, denoted as ``baseline''. We count the total number of updates only approximately: we check the convergence condition only after every $1000$ iterations. Instance sizes are the same as in the scaling experiments, i.e. `small'. 

The left column indicates whether it is a baseline algorithm or the number of threads for relaxed residual belief propagation. The other columns present the numbers for each model we consider. Each cell contains the corresponding number of updates and how many more updates the relaxed version of the algorithm executed (percentage).

On one process, relaxed residual performs more updates than the baseline does, except in the case of the Potts model. It is expected since our algorithm uses relaxed Multiqueue instead of the strict priority queue. Moreover, as expected the overhead on the number of updates in comparison to the baseline increases with the number of threads. This is again due to the relaxation of the priority queue--recall that we allocate $4\times$ more queues than threads.  Interestingly, this overhead is limited even on $70$ threads---its maximum value is $9\%$ maximum. This explains the good performance of our algorithm: we reduce the contention by relaxing accesses to the priority queue, while at the same time the total number of updates does not increase significantly.

\begin{table}
\centering
\begin{tabular}{lcccc}
\toprule
        & \multicolumn{4}{c}{Speedup}\\
\cmidrule(lr){2-5}
Threads & Tree & Ising & Potts & LDPC\\
\midrule
1 & 0.89x & \textbf{1.08x} & \textbf{1.04x} & \textbf{1.14x}\\
2 & 0.75x & 0.51x & 0.47x & \textbf{1.13x}\\
6 & \textbf{1.20x} & 0.77x & 0.73x & \textbf{1.17x}\\
10 & \textbf{1.16x} & \textbf{1.01x} & 0.94x & \textbf{1.20x}\\
20 & \textbf{1.36x} & \textbf{1.66x} & \textbf{1.89x} & \textbf{1.49x}\\
30 & \textbf{1.38x} & \textbf{1.88x} & \textbf{1.82x} & \textbf{1.65x}\\
40 & \textbf{1.61x} & \textbf{2.21x} & \textbf{1.90x} & \textbf{1.62x}\\
50 & \textbf{1.91x} & \textbf{2.67x} & \textbf{2.36x} & \textbf{1.48x}\\
60 & \textbf{1.89x} & \textbf{2.66x} & \textbf{2.85x} & \textbf{1.55x}\\
70 & \textbf{1.61x} & \textbf{2.71x} & \textbf{2.44x} & \textbf{1.52x}\\
\bottomrule
\end{tabular}
\caption{Speedup of relaxed residual belief propagation versus the best non-relaxed alternative on different thread counts. We note that overhead of parallelization can overcome the benefits on small thread counts, as seen in the scaling experiments.} 
\label{table:speed}
\end{table}

\paragraph{Relaxed versus non-relaxed algorithms.} In Table~\ref{table:speed}, we analyze the speedups obtained by the relaxed residual algorithm relative to the best-performing non-relaxed alternative across models and thread counts. Instance sizes are the same as in the scaling experiments, i.e. `small'. 

We notice that our algorithm outperforms the alternatives in most of the cases, often by a large margin---the highest speedup is of $2.85\times$, whereas the highest slow-down is of $0.47x$. Both occur on the Potts model, which is generally the most difficult instance in our tests.
Overall, the combination of our relaxed scheduling framework combined with the standard residual belief propagation is clearly the algorithm of choice at high thread counts, where it consistently outperforms the alternatives; on the other hand, relaxed residual also performs reasonably well on a single thread, making it a consistently good choice all across the board.

\section{Discussion}

We have investigated the use of relaxed schedulers in the context of the classic belief propagation algorithm for inference on graphical model, and have shown that this approach leads to an efficient family of algorithms, which improve upon the previous state-of-the-art non-relaxed parallelization approaches in our experiments. Overall, our relaxed implementations, either Relaxed Residual or Relaxed Smart Splash, have state-of-the-art performance in multithreaded regimes, making them a good generic choice for any belief propagation task.

For future work, we highlight two possible directions. First is to extend our theoretical analysis to cover more types of instances; however, as we have seen, the structure of belief propagation schedules can be quite complicated, and the challenge is the figure out a proper framework for more general analysis. Second possible direction is extending our empirical study to a massively-parallel, multi-machine setting. 

\subsection*{Broader Impact}
As this work is focused on speeding up existing inference techniques and does not focus on a specific application, the main benefit is enabling belief propagation applications to process data sets more efficiently, or enable use of larger data sets. We do not expect direct negative societal consequences to follow from our work, though we note that as with all heuristic machine learning techniques, there is an inherent risk of misinterpreting the results or ignoring biases in the data if proper care is not taken in application of the methods. However, such risks exist regardless of our work.

\subsection*{Acknowledgements}
We thank Marco Mondelli for discussions related to LDPC decoding, and Giorgi Nadiradze for discussions on analysis of relaxed schedulers. This project has received funding from the European Research Council (ERC) under the European Union's Horizon 2020 research and innovation programme (grant agreement No 805223 ScaleML) and from Government of Russian Federation (Grant 08-08).

\bibliographystyle{plainnat}
\bibliography{bp}

\appendix



\section{Optimal schedule on trees}\label{section:optimal-schedule}


On trees, the belief propagation gives exact marginals under any schedule that updates each edge infinitely often. However, there is an optimal schedule that updates each message exactly once, requiring $O(n)$ message updates~\citep{bp}. Assume the tree has a fixed root $v$:
\begin{enumerate}
    \item In the first phase, all messages towards the root are updated starting from the leaves; each message is updated only after all its predecessors have been updated.
    \item In the second phase, all messages away from the root are update starting from the root.
\end{enumerate}
This schedule can be modeled in the priority-based scheduling framework as follows:
\begin{enumerate}
    \item Initially, the outgoing messages at leaf nodes have priority $n$, and all other messages have priority $0$.
    \item When message is updated with non-zero priority, its priority is changed to $0$.
    \item Once all messages $\mu_{k \to i}$ for $k \in N(i) \setminus \{ j \}$ have been updated once with non-zero priority, the message $\mu_{i \to j}$ changes to priority to minimum of update priorities of the incoming edges minus one.
\end{enumerate}
This priority function can clearly be implemented by keeping a constant amount of extra information per message. When the above schedule is executed with an exact scheduler, the algorithm will update each message once with non-zero priority before considering any messages with zero priority, and by following the analysis of~\citep{bp}, one can see that the algorithm has converged at that point.

Similarly, in the relaxed version of the schedule, the algorithm has converged once all messages have been updated once with non-zero priority. In addition, some messages may be updated multiple times with priority~$0$; we call these \emph{wasted} updates, and the updates done while the message has non-zero priority \emph{useful} updates.

\begin{claim}
The relaxed version of the optimal schedule on trees performs $O( n + k^2 H)$ message updates, where $H$ is the height of the tree.
\end{claim}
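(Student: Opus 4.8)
The plan is to mirror the argument used in the ``good case: uniform expansion'' analysis of Lemma~\ref{lem:cases}, since the optimal schedule on trees has essentially the same monotone priority structure that drove that bound. The key observation to establish first is that the priority function for the optimal schedule is \emph{layered}: in the first phase, messages toward the root carry priorities that increase with distance from the leaves, and in the second phase, messages away from the root carry priorities that decrease with distance from the root. In both phases, the set of messages currently eligible for a useful update (those with non-zero priority) forms a ``frontier'' that advances one layer at a time, and within each phase the scheduler always prefers the highest-priority eligible messages, which sit on the current frontier. I would set this up by defining, as in the lemma, the frontier $F(t)$ as the set of messages with non-zero priority at step $t$, and noting that each message receives a useful update exactly once, giving the baseline $O(n)$ useful updates.

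Next I would bound the wasted updates using the two-case frontier argument verbatim from the lemma's good case. While the frontier has size $f \ge q$, every \texttt{ApproxDeleteMin} returns an eligible (highest-priority) message, so every update is useful. Only when the frontier shrinks below $q$ can the scheduler relax past the eligible messages and produce wasted updates; by the $q$-fairness and rank guarantees of $Q_q$, at most $q$ \texttt{ApproxDeleteMin} operations pass before a given frontier message must be returned, so emptying a frontier of size $f < q$ costs $O(qf) = O(q^2)$ updates in the worst case. The crucial bookkeeping step is to count how many times the frontier can shrink below $q$: this happens at most once per ``layer'' of the schedule, and the number of layers across both phases is $O(H)$, since the first phase sweeps inward through $O(H)$ levels and the second phase sweeps outward through $O(H)$ levels. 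Multiplying the $O(q^2)$ per-layer wasted cost by the $O(H)$ layers yields $O(q^2 H)$ wasted updates, and adding the $O(n)$ useful updates gives the claimed $O(n + q^2 H)$ total.

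The main obstacle I anticipate is justifying cleanly that the priorities genuinely behave in a layered, monotone fashion across \emph{both} phases, and in particular that the transition between the first (toward-root) phase and the second (away-from-root) phase does not introduce additional frontier-shrinking events beyond the $O(H)$ already counted. Unlike the lemma's good case, where monotonicity of residuals in the level followed directly from identical edge factors, here monotonicity is imposed by construction through the explicit priority assignment (initial priority $n$ at the leaves, decremented by one each time the frontier advances). I would therefore need to verify that the ``minimum of incoming update priorities minus one'' rule really does keep the frontier confined to a single advancing layer at a time, so that the scheduler's preference is always for the current frontier; once that invariant is in hand, the counting argument carries through essentially unchanged from Lemma~\ref{lem:cases}.
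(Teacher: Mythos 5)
Your overall accounting---$O(n)$ useful updates plus $O(q^2)$ overhead per priority level times $O(H)$ levels---is exactly the paper's, and the first half of your argument (while the frontier has size at least $q$, every \texttt{ApproxDeleteMin} returns a message with non-zero priority; once it is smaller, fairness clears each remaining frontier message within $q$ pops) goes through verbatim. The genuine problem is the invariant you say you would need to verify: the frontier is \emph{not} confined to a single advancing layer, and the two phases do not occur one after the other globally. Consider a root $r$ with a leaf child $a$ and a second child $b$ whose own child $c$ is a leaf: after the upward message $\mu_{a \to r}$ is updated, the downward message $\mu_{r \to b}$ becomes eligible with priority $n-1$, while the upward message $\mu_{c \to b}$ still sits in the frontier with priority $n$. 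So phase-one and phase-two messages, at different priorities, coexist in the frontier; in an unbalanced tree this happens pervasively, and ``the frontier shrinks below $q$ at most once per layer'' cannot be established as stated.

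The paper's proof avoids this by not reasoning about the shape of the frontier at all. It groups messages into buckets $B_\ell$ according to the \emph{priority value} $\ell$ at which their useful update will occur, observes that this value equals $n - d$ where $d$ is the maximum distance from the sending node to a leaf along a path avoiding the edge (hence $d$ is at most the diameter, giving at most $2H$ non-empty buckets), and then inducts over buckets in decreasing priority order: once all higher buckets are cleared, every remaining message of $B_\ell$ is eligible at priority $\ell$ and hence among the top elements of the scheduler, so while at least $q$ of them remain every pop is useful, and clearing the tail of fewer than $q$ costs $O(q^2)$ operations by fairness---regardless of which lower-priority eligible messages happen to be in the frontier at the same time. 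Your argument becomes correct once you replace ``layer $=$ distance level, frontier advances one layer at a time'' with this bucket-by-priority induction; the arithmetic you already have then carries over unchanged.
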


\begin{proof}
For the purposes of analysis, assign messages into buckets $B_1, B_2 \dotsc, B_n$ so that bucket $B_\ell$ contains the messages that will have their useful update done with priority $\ell$. One can observe that the update priority of message $\mu_{i \to j}$ is the $n - d$, where $d$ is the maximum distance from node $i$ to a leaf using a path that does not cross edge $\{ i, j \}$. Since this is bounded by the diameter of the tree, there are at most $2H$ non-empty buckets.

Assume that all messages in buckets $B_n, B_{n-1}, \dotsc, B_{\ell + 1}$ have been already had a useful update. We now show that in there can be at most $k^2$ wasted updates before all messages in $B_\ell$ have had a useful update. Since all earlier buckets have been processed, all messages in $B_\ell$ have either already had a useful update, or have priority~$\ell$. Let $b$ be the number of messages remaining in bucket $B_\ell$:
\begin{itemize}
    \item While $b \ge k$, there are at least $k$ messages with priority $\ell$, so each update is a useful update of a message in $B_\ell$
    \item When $b < k$, there can be wasted updates. However, since buckets $B_n, B_{n-1}, \dotsc, B_{\ell + 1}$ have had all useful updates, the top elements in the schedule will be from bucket $B_\ell$, and thus by the guarantees of the scheduler, there can be at most $k-1$ wasted updates before the top element is processed. Thus, in $b(k-1) = O(k^2)$ updates, all remaining messages of $B_\ell$ will have their useful update.
\end{itemize}
By an inductive argument, all non-empty buckets have been processed after $O(k^2 H)$ wasted updates, so the total number of updates is $O(n + k^2 H)$.
\end{proof}


\section{Additional experiments}\label{app:experiments}


\subsection{Results: moderate input sizes}\label{app:experiments-moderate-all}

In Tables~\ref{table:moderate:full:time} and~\ref{table:moderate:full:updates}, we present running times and update counts for all algorithms included in our main experiments.
To present all the executed algorithms in the table, we shrink the abbreviations a little bit: \textsf{Coarse-Grained} now becomes \textsf{CG}, \textsf{Splash} becomes \textsf{S}, \textsf{Random Splash} becomes \textsf{RS},
\textsf{Relaxed Residual} rests \textsf{Residual}, \textsf{Weight-Decay} becomes \textsf{WD}, \textsf{Relaxed Priority} becomes \textsf{Priority}, and, finally, \textsf{Relaxed Smart Splash} becomes \textsf{RSS}.

\begin{landscape}
\begin{table}
\centering
\begin{tabular}{lccccccccccccc}
\toprule
 & & \multicolumn{7}{c}{Prior Work} & \multicolumn{5}{c}{Relaxed} \\
\cmidrule(lr){3-9}   \cmidrule(lr){10-14} 
Input & Residual & Synch   & CG     & S $2$  & S $10$ & RS $2$  & RS $10$ & Bucket & Residual & WD & Priority & RSS $2$ & RSS $10$  \\
\midrule
Tree  & 1.30 min & 2.538x  & 0.265x & 0.608x & 1.648x & 2.252x  & 2.241x  & 1.692x & 1.391x   & 1.282x & 1.239x & 2.121x  & 2.110x \\
Ising & 2.76 min & 3.009x  & 0.801x & 0.609x & 5.393x & 11.731x & 13.512x & 3.311x & 6.720x  & 6.276x & 5.759x & 14.175x & 10.337x\\
Potts & 3.02 min & ---     & 0.624x & 0.484x & 1.041x & 11.855x & 12.854x & ---    & 7.454x  & 5.978x & 5.850x & 15.235x & 11.091x \\
LDPC  & 4.62 min & 17.735x & 1.166x & ---    & ---    & 5.150x  & ---     & 3.044x & 13.393x & 5.615x & ---    & 10.519x & ---\\
\bottomrule
\end{tabular}

\caption{Algorithm speedups with respect to the sequential residual algorithm. Higher is better.}
\label{table:moderate:full:time}
\end{table}

\begin{table}
\centering

\begin{tabular}{lccccccccccccc}
\toprule
 & & \multicolumn{7}{c}{Prior Work} & \multicolumn{5}{c}{Relaxed} \\
\cmidrule(lr){3-9}   \cmidrule(lr){10-14} 
Input & Residual & Synch   & CG     & S $2$  & S $10$  & RS $2$ & RS $10$ & Bucket & Residual & WD     & Priority & RSS $2$ & RSS $10$  \\
\midrule
Tree  & 10M      & 48.000x & 1.003x & 8.658x & 16.442x & 8.344x & 15.197x &5.110x  & 1.020x   & 1.012x & 3.657x   & 2.565x  & 5.027x \\
Ising & 25.3M    & 45.006x & 1.003x & 5.719x & 9.266x  & 5.787x & 10.232x &3.996x  & 1.058x   & 1.068x & 1.816x   & 1.878x  & 6.147x \\
Potts & 30M      & ---     & 1.006x & 5.903x & 9.005x  & 5.983x & 9.109x  &---     & 1.068x   & 1.053x & 1.791x   & 1.891x  & 6.328x \\
LDPC  & 7.23M    & 4.404x  & 1.003x & ---    & ---     & 4.089x & ---     &1.668x  & 1.007x   & 0.883x & ---      & 0.973x  & --- \\
\bottomrule
\end{tabular}

\caption{Total updates relative to the sequential residual algorithm at 70 threads. Lower is better.}
\label{table:moderate:full:updates}
\end{table}
\end{landscape}

Table~\ref{table:moderate:full:time} contains the execution times (speedups) of the algorithms relative to the sequential baseline. Table~\ref{table:moderate:full:updates} contains the number of updates performed by the algorithms in compare to the number of updates performed by the sequential baseline. The results do not differ much from the ones presented in the main body of the paper. The only notable thing is that \textsf{Random Splash} with $H = 10$ it performs better on Ising and Potts model than \textsf{Random Splash} with $H = 2$. However, we chose \textsf{Random Splash} with $H = 2$ as the best one, since Random Splash with $H = 10$ does not finish on LDPC input. Nevertheless, \texttt{Relaxed Smart Splash} outperforms \textsf{Random Splash} with both settings of $H$.

\subsection{Random synchronous algorithm}\label{app:random-synchronous}

In Table~\ref{tab:random-synch}, we present the execution time of random synchronous algorithm of Van der Merve et al.~\citep{merwe2019} on $70$ threads (\textsf{Random Synch} $70$) with different values of $lowP = 0.1, 0.4 \text{ and } 0.7$, where the parameter $lowP$ controls the random selection fraction $p$ in steps where the algorithm is converging slowly (see Section~\ref{app:parallel-bp}).
We compare it with the execution time of two baselines: Synchronous algorithm on $70$ threads (\textsf{Synch} $70$) and Relaxed Residual on one process (\textsf{RR} $1$). Cells with `---' indicate executions that either take more than five minutes to run or simply do not converge.

To summarize, we did not include the execution time of random synchronous algorithm in the scaling plots since it exceeds the execution time of one of the baselines in all cases.

\begin{table}[h!]
\begin{center}
\begin{tabular}{ll@{\ \ \ \ \ }rrrr}
\toprule
 && \multicolumn{4}{c}{Running time (s)}\\
\cmidrule(lr){3-6}
\multicolumn{2}{l}{Algorithm}   & Tree    & Ising   & Potts   & LDPC    \\
\midrule
Synch 70      &  & 4.088 &  --- & --- & 3.504 \\
RR 1      &  & 5.579 & 9.012 & 10.583 & 25.663 \\
\midrule
Random Synch 70 & $lowP = 0.1$ & 37.052 & 62.629 & --- & 28.543 \\
      & $lowP = 0.4$ & 8.420 & 20.396 & --- & 7.269 \\
      & $lowP = 0.7$ & 6.306 & 12.581  & ---  & 4.791 \\
\bottomrule
\end{tabular}
\end{center}
\caption{Randomized synchronous algorithm versus baselines.}
\label{tab:random-synch}
\end{table}

\end{document}